\newcommand{\tick}[1]{\texttt{tick}\{#1\}}
\newcommand{\fun}[3]{\texttt{fun}\{#1\}\; #2 . #3}
\newcommand{\app}[2]{#1\; #2}
\newcommand{\node}[3]{\texttt{Node}(#1;#2;#3)}
\newcommand{\leaf}[0]{\texttt{Leaf}}
\newcommand{\matcht}[6]{\texttt{match}\;#1\;\texttt{with}\;\texttt{Leaf}\; \rightarrow #2 \mid \texttt{Node}(#3,#4,#5) \; \rightarrow \; #6}
\newcommand{\elet}[3]{\texttt{let}\;#1\;=\;#2\;\texttt{in}\;#3}
\newcommand{\ite}[3]{\texttt{if}\;#1\;\texttt{then}\;#2\;\texttt{else}\;#3} 
\newcommand{\closure}[4]{\texttt{C}_{#1}(#2; #3.#4)}
\newcommand{\pot}[2]{\langle #1; #2 \rangle}
\newcommand{\basic}{\mathbf{basic}}
\newcommand{\unit}{\mathbf{unit}}
\newcommand{\tbool}{\mathbf{bool}}
\newcommand{\sub}{<:}
\newcommand{\wv}{\mathsf{wv}}
\newcommand{\li}{\mathsf{loc}}
\crefname{section}{\S\!}{\S\S\!}
\crefname{subsection}{\S\!}{\S\S\!}
\newtheorem{theorem}{Theorem}[section]
\newtheorem{lemma}[theorem]{Lemma}
  \providecommand\BibTeX{{%
    \normalfont B\kern-0.5em{\scshape i\kern-0.25em b}\kern-0.8em\TeX}}}
\begin{document}

%%
%% The "title" command has an optional parameter,
%% allowing the author to define a "short title" to be used in page headers.
\title{Automatic Amortized Resource Analysis with the Quantum Physicist's Method}

%%
%% The "author" command and its associated commands are used to define
%% the authors and their affiliations.
%% Of note is the shared affiliation of the first two authors, and the
%% "authornote" and "authornotemark" commands
%% used to denote shared contribution to the research.
\author{David M Kahn}
%\authornote{Both authors contributed equally to this research.}
\email{davidkah@andrew.cmu.edu}
%\orcid{1234-5678-9012}
\author{Jan Hoffmann}
%\authornotemark[1]
%\email{webmaster@marysville-ohio.com}
\affiliation{%
  \institution{Carnegie Mellon University}
  \streetaddress{5000 Forbes Ave}
  \city{Pittsburgh}
  \state{Pennsylvania}
  \country{USA}
  \postcode{15232}
}

%%
%% By default, the full list of authors will be used in the page
%% headers. Often, this list is too long, and will overlap
%% other information printed in the page headers. This command allows
%% the author to define a more concise list
%% of authors' names for this purpose.
\renewcommand{\shortauthors}{Kahn and Hoffmann}

%%
%% The abstract is a short summary of the work to be presented in the
%% article.
\begin{abstract}
  We present a novel method for working with the physicist's method of amortized resource analysis, which we call the \textit{quantum physicist's method}. These principles allow for more precise analyses of resources that are not monotonically consumed, like stack. This method takes its name from its two major features, \textit{worldviews} and \textit{resource tunneling}, which behave analogously to quantum superposition and quantum tunneling. We use the quantum physicist's method to extend the Automatic Amortized Resource Analysis (AARA) type system, enabling the derivation of resource bounds based on tree depth. In doing so, we also introduce \textit{remainder contexts}, which aid bookkeeping in linear type systems. We then evaluate this new type system's performance by bounding stack use of functions in the Set module of OCaml's standard library. Compared to state-of-the-art implementations of AARA, our new system derives tighter bounds with only moderate overhead.
\end{abstract}

%%
%% The code below is generated by the tool at http://dl.acm.org/ccs.cfm.
%% Please copy and paste the code instead of the example below.
%%
\begin{CCSXML}
<ccs2012>
   <concept>
       <concept_id>10011007.10011006.10011008.10011009.10011015</concept_id>
       <concept_desc>Software and its engineering~Constraint and logic languages</concept_desc>
       <concept_significance>100</concept_significance>
       </concept>
   <concept>
       <concept_id>10011007.10010940.10010992.10010998.10011000</concept_id>
       <concept_desc>Software and its engineering~Automated static analysis</concept_desc>
       <concept_significance>500</concept_significance>
       </concept>
   <concept>
       <concept_id>10003752.10010124.10010138.10010143</concept_id>
       <concept_desc>Theory of computation~Program analysis</concept_desc>
       <concept_significance>500</concept_significance>
       </concept>
 </ccs2012>
\end{CCSXML}

\ccsdesc[100]{Software and its engineering~Constraint and logic languages}
\ccsdesc[500]{Software and its engineering~Automated static analysis}
\ccsdesc[500]{Theory of computation~Program analysis}

\ccsdesc[500]{Software and its engineering~General programming languages}
%%
%% Keywords. The author(s) should pick words that accurately describe
%% the work being presented. Separate the keywords with commas.
\keywords{resource analysis, quantum, potential method, tree depth, remainder context}

%%
%% This command processes the author and affiliation and title
%% information and builds the first part of the formatted document.
\maketitle

  \section{Introduction}
\label{sec: intro}

%{\bf (Automatic) resource analysis has been studied a lot}
%
%cite related work

Resource analysis of programs is an ongoing area of study. Techniques range from recurrence solving \cite{POPL:KML20,kincaid2017compositional}, to term re-writing \cite{RTA:AM13,naaf2017complexity}, to type systems \cite{POPL:CW00,dal2013geometry}, and beyond. Because automatic resource analysis is generally undecidable, it is a challenge in such work to balance analysis strength with programmer burden. Some opt for more analytical power at the expense of requiring more manual intervention \cite{nipkow2019amortized,gueneau2018fistful}. Others emphasize keeping such analyses automatable \cite{albert2007cost,gulwani2009speed}, so that their use remains accessible to non-experts. The ideas of the \textit{quantum physicist's method}, which we introduce in this work, fall into the latter camp. The new method allows us to obtain tighter analyses without hurting automatability.

%{\bf High-water mark bounds for non-monotone resources like memory is particularly challenging}
%
%only a few techniques exist; cite other works that can do that

Usually frameworks for automated resource analysis focus on resources that are monotonely consumed, like time. Only a few techniques, like \cite{albert2015non,vasconcelos2008space,campbell2009amortised,hofmann2003static}, have been able to extend to \textit{non}-monotonely consumed resources, like space. For these resources, the \textit{high-water mark} of resource use is the pertinent metric, rather than the net cost. Even for techniques that can reason about the high-water mark, performing tight analyses remains challenging.  This is one problem our work addresses for techniques that use the \textit{physicist's method} of \cite{kn:Tarjan85} for resource analysis. 

%{\bf AARA can derive high-water mark bounds naturally}
%roughly how AARA works; potential etc.

Automatic Amortized Resource Analysis (AARA) is a type-based technique
that uses the physicist's
method~\cite{hofmann2003static,hoffmann2017towards}. This system
endows its types with a notion of \textit{potential energy}. As values
of these types are created or destroyed in computation, they take in
or release this potential to pay for costs. This is how the
physicist's method tracks resources, just like a physicist tracks the
energy change in a physical system. To infer these types, AARA reduces
the synthesis of valid potential annotations to linear program
solving, allowing for efficient automation. Then when AARA types a
function, that type's potential certifies the resource use of that
function: The potential of the argument bounds the high-water mark
cost, and the difference between the potential of the argument and
return types bounds the net cost.

%{\bf This article addresses two problems of AARA make bounds loose}

While AARA can reason about the high-water mark, the bounds it derives can be quite loose. This occurs for two reasons, both stemming from how AARA localizes potential:

%1) problem with branching: need to decide how to divide potential
%without knowing what branch will be taken; need to be ready for both
%-> bounds are sum instead of max; can at worst double

%2) problem with sequencing: potential that becomes available can not
%be assigned to data and is wasted

\begin{enumerate}

\item There is no way of knowing ahead of time which way a control flow branch will resolve. Thus, if each branch draws its cost out of different sets of values, AARA must ready potential on both of those sets. Potential localized onto the unused values will then be wasted. This can double the potential AARA requires to type a branching expression, and affects both monotone and non-monotone resources.

\item When non-monotone resources are returned, reusing them effectively is critical to achieving a good resource bound. However, AARA is only able to store resources on values during their creation, which may not coincide with the return of resources. This can mean that all returned resources are lost. Each time such code is run in sequence, this effect compounds.

\end{enumerate}

Indeed, these problems are not constrained to AARA, but are present in every existing system that uses the physicist's method. This dates back to the seminal paper on the technique \cite{kn:Tarjan85}, where it was suggested that potential should be assigned locally to the program's data structures.

%remainder contexts; worldviews; resource tunneling 

%{\bf innovation 1: worldviews}

%fix the branching problem; can make lazy splitting decision later

%can be seen as maintaining a superposition of decisions as in quantum physics.

To solve the first of these problems, the quantum physicist's method introduces \textit{worldviews}. Worldviews fix the branching problem by allowing for a lazy potential assignment to values. This means potential does not need to be readied until the branch has resolved, so potential is not wasted for the unused branch. Worldviews work analogously to quantum superposition by maintaining a collection of different potential allocations, which we only resolve as needed.

%{\bf innovation 2: resource tunneling}

%assign potential only temporarily and pretend we made a different
%assignment later

%allows us to use a reasoning principle similar to quantum tunneling;
%called resource tunneling;

%-> if in some worldview high-water mark is met then all worldviews are
%justified in paying the net cost

%this works even if the decision would be temporarily inconsistent at
%some point in the evaluation where the potential might go negative

To solve the second problem, we want to be able to assign resources on a temporary basis, so that they can be reassigned later. For this purpose, the quantum physicist's method introduces \textit{resource tunneling}. Resource tunneling uses worldviews in the same way quantum tunneling uses superposition. The basic principle is that, if in \textit{some} worldview potential could be allocated to meet the high-water mark, then \textit{all} worldviews are justified in paying only the net cost. This reasoning principle turns out to be sound even if some worldviews might temporarily track inconsistent potential allocations, with negative potential assigned to some values.

%{\bf we present these ideas for a type system for a functional
% language and linear potential functions}

%worldviews are represented with a set of simultaneous type derivations for the same expression

%resource tunnel is achieved by worldviews in conjunction with novel remainder contexts

%remainder contexts -> potential (that becomes available) is not lost
%but can be re-assigned to the same data structure
%-> avoids waste of potential
%-> reason about potential after evaluation

We implement the quantum physicist's method ideas in a functional language by extending the AARA type system. In this setting, worldviews are a set of simultaneous type derivations for the same expression. To further include the principle of resource tunneling, we also extend the type system with novel \textit{remainder contexts}. These remainder contexts track the potential remaining on values after an expression's evaluation, so that resource tunneling may reassign it. Remainder contexts turn out to be useful in their own right, as their lazy tracking of remaining potential ensure that values waste as little of their potential as possible. This functionality actually subsumes the previous \textit{sharing} construct of AARA.

%{\bf benefits: conservative extension of AARA that preserves all the
%  existing benefits and gives tighter bounds for non-monotone}

%bound inference is reduced to LP solving; 
%formal soundness proof wrt to an operational cost semantics;
%types are certificates; 
%amortization; ...

%opens the door for new kinds of potential; like the height of trees.

%think of stack space; bound is height not \#nodes in binary search
%(worldviews) and tree traversal (resource tunneling).

By conservatively extending AARA with our new reasoning principles, we maintain its benefits while giving tighter bounds for non-monotonic resources. The system supports amortization, type inference is still reducible to linear program solving, and type derivations still certify resource bounds. The system can still handle programmer-defined cost models for general resources, and can compose its analyses together. And the system is still backed by a soundness theorem proven with respect to an operational cost semantics. However, unlike AARA, our new type system can better analyze resources like stack, and can even use its new features to base potential on tree depth. This enables tight automated analysis of resource usage like the stack space of binary searches or tree traversals.

%{\bf implementation and experimental evaluation}

%prototype implemented in OCaml

%focus on stack bounds

%experiments show: performance overhead compared to RaML: moderate or ?? ms on average

%Case study: OCaml's set module from standard library; comparison to RaML
%-> can derive bounds for 21/22 functions (16 more) 
%-> 20 bounds are tight (15 are asymptotically better than RaML's bound)

We implemented a prototype of our new type system in OCaml, and test its efficacy against Resource Aware ML (RaML), a state-of-the-art implementation of AARA. As a case study, we use them to analyze the stack space of functions from the OCaml standard library Set module, which implements sets with binary trees. Experiments show our prototype attains significantly better resource bounds with only moderate overhead. Our prototype can derive bounds for 36 out of 37 functions tested, and obtains tighter results than RaML on 30 cases. RaML never obtains a tighter result than our implementation.

%{\bf list of contributions}

To summarize our contributions:
\begin{itemize}
	\item We introduce worldviews and resource tunneling, new quantum-inspired reasoning principles to adapt local potential functions to nonlocal phenomena for use in the physicist's method of amortized analysis.
	
	\item We design a type system to incorporate these new principles into AARA using remainder contexts and simultaneous typings. As a result, the system can express a wider range of bounds, such as functions of tree depth .

        \item We prove the soundness of the new type system with respect to an operational cost semantics that is parametric in the resource of interest.
	
	\item We provide a prototype implementation and experimentally show that it can derive tighter bounds with moderate performance overhead.
\end{itemize}

\section{Overview}

In this section,we first describe the state-of-the-art in AARA and
then informally explain our two main technical innovations: worldviews
and resource tunneling.

\subsection{Automatic Amortized Resource Analysis}
\label{subsec: aara}

AARA uses types to perform the physicist's method of amortized analysis.
In amortized analysis~\cite{kn:Tarjan85}, bursty costs are smoothed
out through incremental prepayment, or \textit{amortization}.
The physicist's method is a way of accounting for amortization by
imbuing program state with a notion of potential energy.
To enable automatic inference, this potential assignment is locally
defined over data structures like lists and trees in AARA.
The key invariants of this accounting is that the difference in
potential between any two points of execution is enough to pay for the
net cost, and the initial potential is sufficient to cover the
high-water mark of cost. 
The type AARA assigns a function captures these invariants: the
argument type's potential certifies an upper bound on the high-water
mark, and the difference between the argument and the return type's
potentials certifies an upper bound on the net cost.

AARA is parametric in the resource of interest and can handle a
variety of language features and potential functions. Aside from time, 
the system has also been used to measure more complicated resources like heap 
\cite{hofmann2003static} and smart contract gas usage \cite{das2021resource}.
For functional languages, AARA-based systems like
RaML~\cite{hoffmann2017towards} support user-defined
datatypes~\cite{Jost09}, higher-order functions~\cite{Jost10}, and
polynomial potential functions~\cite{HoffmannAH12}.
AARA can also be applied to imperative
languages~\cite{atkey2010amortised,CarbonneauxHZ15} and to derive bounds on
the expected cost of probabilistic
programs~\cite{NgoCH17,wang2020raising}.
Even with all of these features in play, bound inference can be
reduced to solving a system of linear inequalities, allowing efficient
resource analysis with off-the-shelf linear program solvers.
Moreover, a soundness theorem proves that the derived bounds are sound
with respect to an operational cost semantics and type derivations are
easily-checkable bound certificates.

To simplify the presentation of this work, we build our type system off of a fragment of AARA. As presented, our type system supports only linear resource functions (linear AARA), and excludes types like products and sums. Nonetheless, this work can be extended to all of the features of AARA, and still reduces to linear constraint solving.
To illustrate the analysis in this section, we assume a cost model
that counts the number of stack frames allocated during an evaluation.
For simplicity, we assume that every function call results in the
allocation (before the call) and deallocation (after the call) of a
stack frame.

%{\bf linear AARA by example}

%type of map

%use output of map in map;map (that's good, what's the type of first map?)

%map f x; map f x (that's bad)

%even worse: tree examples; what 

%[examples: binary search; tree size; list map (without code map (fn x =>  x +1) l)]

\paragraph{Example}
To see how AARA expresses potential, consider the \verb"map" function
from the OCaml List module. This is a higher order function with the
type
$$\verb"map": (\tau \rightarrow \rho) \rightarrow L(\tau) \rightarrow L(\rho) \; .$$
%
%\jan{maybe define map here?}
To include potential, suppose \verb"map" is given a function \verb"f"
that requires 1 stack frame and returns 1 stack frame. 
This models a function $f$
that performs single auxiliary function call in its body.
Then the evaluation of the expression \verb"map f l" requires $n + 2$
stack frames where $n$
is the length of the list $l$
and $n + 2$ stack frames become available again after the evaluation.

The function \verb"map f" should be typed to take a list with $1$
unit of resource per element and $2$
additional units, and return a list with $1$
per element and $2$
additional units.
This can be expressed with a potential-annotated
type like
$$\verb"map" : \pot { \pot \tau 1 \rightarrow \pot \rho 1 } 0 \rightarrow { \pot
  {L^1(\tau)} 2 \rightarrow \pot {L^1(\rho)} 2 }$$
where the angle
brackets pair types with potential, and the superscript on lists gives
the potential per element.%
\footnote{ In this work, it is beneficial to separate potential
  annotations from types and combine them in an annotation context
  $P$.
  The type of the function map then has the form
  $\langle (\tau \rightarrow \rho) \rightarrow L(\tau) \rightarrow
  L(\rho) ; P \rangle$. This notation will be discussed formally in
  \cref{sec: type}.}
Examining the type of \verb"map f", we see that the difference
$2 + n - (2 + m)$
between the input and output potential is an upper bound on the net
cost of running the function, where $n$
is the length of the input list and $m$ is the length of the result.
Since $n=m$,
the net cost is $0$
and the bound on the high-water mark is given by the initial potential
$n + 2$.

The AARA approach of imbuing types with potential allows for greater
compositionality.
Consider the function
\begin{center}
\verb"let map2 f l = map f (map f l)".  
\end{center}
We can assign \verb"map2" the same type that we assigned \verb"map",
providing the correct high-water mark bound $n + 2$.
This bound is justified by the type system since the expression
\verb"map f l" of type list has the type $\pot
  {L^1(\tau)} 2$, which matches the type of the second argument
of \verb"map".

%\jan{the following could be left out}
The function \verb"map" can be given many different potential
annotations, specialized depending on its arguments and us of the
result in the continuation.
For example, we would use the type
$$\verb"map" : \pot { \pot \tau 1 \rightarrow \pot \rho 1 } 0 \rightarrow { \pot
  {L^2(\tau)} 2 \rightarrow \pot {L^2(\rho)} 2 }$$
if the result of a call of \verb"map" is used as the argument of
function that requires the type $L^2(\tau)$.
In general, type annotations of \verb"map" can be described with
linear inequalities that state e.g. the potential of the
input list $q$ is at least the potential $p$ of the resulting list.
%\jan{until here}

\paragraph{Limitations}

Not all code is tightly analyzed by AARA.
Sometimes this is for meta-theoretical reasons like the Halting
Problem, or because the code behaves based on semantic rather than
structural properties.
However, even common code patterns can cause problems for non-monotone
resources like stack frames.  
Consider the following expression, where the function $g$ is given the same type as $f$ from above.
\begin{center}
  \verb"let x = map f l in let y = map g l in ()"
\end{center}
The evaluation of the expression needs $n + 2$
stack frames where $n$
is the length of the list $l$.
However, AARA can only derive the loose bound $2n + 2$.
The problem is that the potential on the two uses of $l$
has to be summed up.
The potential that becomes available after the first call of
\verb"map" is assigned to $x$ but cannot be transferred back to $l$.

Other problematic examples are tree traversals as implemented in
\cref{fig: excode}.
The function \verb"bin" implements a binary search and the function
\verb"size" is a tree traversal that computes the number of inner
nodes.
Both functions need $h$
stack frames in their evaluation, where $h$
is the height of the tree $t$ in the argument.
However, AARA can only derive a bound based on the number of nodes in
the tree, rather than its depth.
Such a bound can be exponentially loose.
Code like the binary search from \cref{fig: bin} poses a problem
because the analysis does not know which subtree will be searched. 
And when bounding the stack usage of a tree traversal like \cref{fig: tsize}, AARA does
not recognize that the stack from the left subtree can be reused for
the right. Because AARA must eagerly assign potential localized
to a given subtree, it readies potential for \textit{both} subtrees in
each of these two examples.

\begin{figure}
\begin{subfigure}{0.49\textwidth}
\centering
\begin{verbatim}
let rec bin x t = match t with 
  | Leaf -> false
  | Node(v,t1,t2) -> if x = v then true
    else if x < v then bin x t1
    else bin x t2
\end{verbatim}
\caption{Binary Search}
\label{fig: bin}
\end{subfigure}
\begin{subfigure}{0.49\textwidth}
\centering
\begin{verbatim}
let rec size t = match t with 
  | Leaf -> 0
  | Node(_,t1,t2) -> size t1 + size t2 + 1
\end{verbatim}
\caption{Tree Size}
\label{fig: tsize}
\end{subfigure}
\caption{Example Code}
\label{fig: excode}
\end{figure}

\paragraph{Quantum Motivation}

There is historical precedent for the value of the comparison between resource analysis and physics: In Tarjan and Sleator's description of amortized analysis \cite{kn:Tarjan85}, they supply a similar metaphor based on conservation of energy, and the resulting \textit{physicist's method} has been of pedagogical value since then. This view has also proven by fruitful in application, as witnessed by systems like AARA and others \cite{cutler2020,gueneau2018fistful,timecredits,atkey2010amortised}. It is therefore natural to consider further interdisciplinary parallels between physical reasoning and resource analysis.  

The concepts presented in this work represent a refinement of the physicist's method framework that qualitatively parallels the refinement of general physical principles to quantum principles. The very development of the system was guided by quantum physical intuition: \textit{resource tunneling} only came about when the parallels between \textit{worldviews} and quantum superposition were realized. And these connections are not a fluke: both quantum phenomena and resource usage follow linear logical rules, so they should be expected that they have similar behaviours. Indeed, many additional parallels will be pointed out throughout this work.

\subsection{Worldviews}

Our first innovation are worldviews, which can be thought of as simultaneous type derivations for the same expression. When an expression might have potential assigned in multiple ways, different worldviews will cover each of the assignments. This allows the choice of assignment to be kept in suspense, and determined lazily when needed. This capability can be used to solve the problems faced in analyzing branching code, because differing solutions may specialized for each branch. 

Type-theoretically, worldviews can be thought of similarly to introducing a top-level additive sum type across the product types describing a context. The additive sum ($\&$) is a linear type connective that represents being able to specialize to either type, similarly to type intersection. It does not distribute over products, so that in general $(x:\alpha_0 \otimes y:\beta_0) \&  (x:\alpha_1 \otimes y:\beta_1) \neq x:(\alpha_0 \& \alpha_1) \otimes y:(\beta_0 \& \beta_1)$. This non-distributivity makes the specialization of a value's additive sum become dependent on the specialization of other values, introducing non-local effects on typing. This is necessary in a resource-aware setting, as one cannot both have their cake and eat it too: If we split our cake between our plate $x$ and our stomach $y$, distributivity would allow us to choose the best assignment to each independently. Despite the similarity to additive sums, it should be noted that worldviews are not identical to them, due to mechanics introduced by resource tunneling.

\paragraph{The Issue with AARA} Let us look at the case of binary search (\cref{fig: bin}) to see where the current AARA-style analysis fails. To tightly analyze the number of recursive calls, we want to assign potential based equal to its depth. This potential (less 1 to pay cost) should be divided among the two subtrees to properly type them for recursive calls. There are at least two potential assignments to consider: giving the left subtree all the depth potential, or giving the right subtree all of it. The correct choice will depend upon which branch is taken by the search. However, the branch is determined \textit{after} the point where the subtrees are created, and is not statically determined. This precludes any AARA-style static localization of potential to the subtrees.

%{\bf this is like superposition; and Alice and Bob}

%explain connection to superposition

%here it's not mysterious: Alice and Bob

\paragraph{Our Solution} To solve this problem, we want to be able to hold the decision of how to split potential in suspense, and determine it later when the branch occurs. This introduces the physicist method's potential to a quantum physics phenomenon: quantum superposition. In physics, various quantities like mass and energy are conserved, just like potential is. When a particle splits into two, the new particles may enter a superposition of states, each of which represents a different division of these conserved quantities. One interpretation of this superposed existence is as branching worlds \cite{everett1957relative}. Only upon observation is the particular state determined, \textit{collapsing} the superposition. In quantum physics, this collapse is probabilistic, but here we may deterministically choose the most useful way for our type derivation.

Superposition is actually not such a mysterious way to bookkeep for classical resources. Consider that Alice and Bob are given \$5 by their parents to spend at a candy store. On their way to the store, they might split the money 50/50, or have Alice carry it all, or use any other split they desire. Only when the two finally spend the money at checkout is it determined how much each must carry. If we were keeping a real-time account of their finances, it would not suffice to assign any one split to the \$5 on their way to the store. Instead, the money is better tracked as if it were in superposition. Then at checkout we could collapse this superposition into the proper split between Alice and Bob.

\paragraph{Example } Now let us apply worldviews to binary search. The number of recursive calls should be equal to the depth of the input tree, so take the input to have 1 unit of depth potential. When pattern matching the input tree, we get 1 unit of potential to pay for a recursive call, and create two subtrees. Instead of assigning one subtree all the depth potential, we assign it to both, each in a different worldview. This leaves the worldviews tracking a superposition of two different, but equally valid, type derivations for the pair of subtrees. Then we can collapse to the left assignment when analyzing the left search branch, and vice versa for the right. This effectively delays the choice of the potential assignment until after the branch. In each branch, we then find that the subtree is properly typed with 1 unit of depth potential for the recursive call. This allows the analysis to find the desired depth bound.

\paragraph{Type System}
In the remainder of this paper we develop a type system that supports
worldviews while preserving the benefits of AARA such as type
inference with LP solving, a formal soundness proof with respect to a
cost-semantics as well as natural compositionality.
For the latter, it is important that worldviews can also be used to
construct trees with potential assigned to their height without waste.
The key idea is that we require the existence of two worldviews: One
that assigns potential to the left subtree and one that assigns
potential to the right subtree.
Also note that two worldviews in the type derivation of the function
\verb"bin" suffice to cover all possible paths that the binary search
can take when traversing a tree.

\subsection{Resource Tunneling}
\label{sec: rt}

Our second innovation is resource tunneling; a reasoning principle that allows us to only temporarily assign potential to some location, and change the assignment later. This means that there does not need to be a single consistent history of potential assignments, and instead we can jump between those most beneficial at a given time. This principle allows better analysis of sequenced operations on non-monotonic resources, like tree traversals' stack bounds, because subsequent operations can make use of potential assignments that would be inconsistent with previous assignments.

Type theoretically, this principle allows our worldview additive sum types to "undo" their specialization under certain conditions. This capability is somewhat more dynamic than the usual treatment of additive sums. It does this by allowing otherwise-unsound specializations, with local instances of negative potential, for temporary bookkeeping of resources. The specific mechanics of this bookkeeping are described in the following paragraphs.
 
\paragraph{The Issue with AARA}
Let us look at the function \verb"size" (\cref{fig: tsize}) to see where the current AARA-style analysis for stack bounds derives loose bounds. To tightly analyze the number of stack frames, we want to assign potential based equal to its depth. This potential (less 1 to pay cost) should be divided among the two subtrees to properly type them for recursive calls. However, no division of potential is sufficient. If the left subtree gets all of the potential to be typed correctly for its recursive call, the right subtree's potential would go negative during its recursive call, and vice versa. Even worldviews do not solve this problem. What is needed is a justification for reassigning the returned stack frames from the left subtree traversal to the right subtree. 

\paragraph{Our Solution}
To solve this problem, we allow for the temporary negativity of some potential in some worldviews, so long as other worldviews properly justify it with resource tunneling: So long as \textit{some} worldview can justify the high-water mark of cost, then \textit{all} worldviews can only consider net cost. This can be justified by interpreting the witness worldview as showing how to reallocate resources for temporary lending in such a way that no potential goes negative. 

Resource tunneling is analogous to another quantum phenomenon: quantum tunneling. This phenomenon allows low-energy particles to pass high-energy potential barriers. The way this works is due to the Heisenberg uncertainty principle: the more well determined a particle's location is, the less its energy can be \cite{griffiths2018introduction}. Thus if a particle is well-known to be prior to the barrier, its energy must be less well known, and is treated as a distribution over superposed states. While by and large the particle's states may seem low-energy, a very small fraction of those states have high enough energy to pass the potential barrier. Since the particle behaves somewhat as if it was in any of those states, it can sometimes pass over the barrier. Crucially, this does not induce collapse of the states. A particle observed after the barrier might be found to have too low energy to have crossed it in the first place. There is no consistent history of energy level, in the classical sense.

Resource tunneling is a slick way of bookkeeping fungible resources in conjunction with worldviews. Consider Alice and Bob at the candy store with \$5 split between them. Alice wants a \$3 pack of caramels from a vending machine, while Bob wants some \$2 chocolate. Our goal is to show with our bookkeeping that they have enough money to pay for their candy. It seems easy to split up the money until we find that the vending machine only takes \$5 bills. The \$5 minimum on the vending machine is like a potential barrier, and the \$3 that Alice has is her apparent energy level. Alice can resolve her plight through resource tunneling. Because the split of \$5 between Alice and Bob is not fixed, Alice can borrow Bob's \$2, exchange it for a \$5 bill, buy from the vending machine, and then give Bob back the \$2 change. This is an elegant solution, and we want our bookkeeping to reflect it without tracking every transaction between Alice and Bob. However, assigning them a \$5/\$0 split conflicts with Alice leaving with \$3 worth of candy and Bob leaving with \$2 worth. Nor does assigning them a \$3/\$2 split suffice, as Alice would temporarily have -\$2 at the vending machine. This conundrum is solved by resource tunneling.

The trick to bookkeeping these resources with tunneling is to consider \textit{both} worldviews: the \$5/\$0 split and the \$3/\$2 split. Because the first worldview shows that Alice \textit{could} be carrying \$5 for the vending machine, we know that the \$3/\$2 worldview can lend to that state, pay, and then unlend, paying only the net cost in total. It is therefore okay that the \$3/\$2 worldview temporarily assigns Alice -\$2 at the vending machine, because this lending argument shows the negative money never need be realized, and is only there for bookkeeping. All we need to check to apply this reasoning is that at every point in time there is a worldview witnessing that no one has negative money. No single worldview is always the true worldview.

\paragraph{Type System}
To use resource tunneling with our type system, we introduce \textit{remainder contexts}, which are related to \textit{IO-contexts} from linear logic proof search \cite{hodas1994logic,cervesato1996efficient} and uncomputation from quantum computing \cite{bichsel2020silq}. While typing contexts give types carrying pre-evaluation potential, remainder contexts give types carrying post-evaluation potential. Because every variable is tracked in both, this allows us to reason about the net potential paid out of each variable. We also require that every subexpression is given a variable label, which is easily achieved by re-writing expressions into let-normal form. 

To reason with remainder contexts, in type rules where some variable's value is destructed into parts, the remainder contexts should take the remainders of the parts and repack them under the original variable name. This process "uncomputes" these values to conserve total potential while removing its dependency on extraneous variables, thereby allowing tighter analysis. Then, in typing rules where some variable is used to construct a value, the needed potential of the variable should be lazily split off. Finally remainder contexts are reified in function types, so that function types track the remnants of their inputs. In contrast, for AARA, the only potential that leaves a function is on its return type, so resources returned to the input are lost. This pairing of a reconstructed input with output closely parallels Bennett's construction in quantum computing, where the same pairing is done to preserve invertibility \cite{bennett1973logical}.

\paragraph{Example}
Now let us apply resource tunneling to the function \verb"size". The call stack only grows up to the depth of the tree, so take the input to have 1 unit of depth potential. Stack also is all returned after use, so we want the remainder context to assign the same potential. When pattern matching the input tree, we get 1 unit of potential to pay for the stack frame of the recursive call, and create two subtrees. Now consider the superposition of assigning all potential to the left subtree, and all to the right. When it is all on the left, this justifies the left traversal in the usual way. When it is all on the right, we already know the recursive call on the left subtree is justified from the other worldview, so resource tunneling allows us to only consider paying the net cost of 0 stack. Then the right traversal is justified in the usual way. This allows the analysis to find the desired depth bound.

\subsection{Technical Implications}

\paragraph{Potential Functions}

Using worldviews, one can support new parameters for potential functions by varying how potential is passed along inductive datatypes. The most meaningful of these new metrics is tree depth. It was previously known that, when destructing a tree, there are a couple of different options about how to assign potential to subtrees. In linear AARA, the potential annotation is copied from the tree to its subtrees, giving potential based on the number of nodes. However, sharing potential convexly between subtrees gives potential based on tree depth, because the extreme case assigns all potential along the longest path in the tree. However, while AARA can \textit{destruct} trees in these ways, the challenge is to \textit{construct} trees with such potential. 

Worldviews justify the construction of such convexly-shared potential through convex means. Each of a set of worldviews is used to show that all potential could be allocated to one extreme. With the extreme amounts of potential justified, any convex combination is also justified. In the case of a single unit of potential per tree depth, this looks like the following: In the first worldview, the left subtree in the construction has potential equal to its depth while the right subtree has none, and vice versa in the second worldview, where all other annotations in the context are held constant. At least one of these subtrees witnesses the actual maximum tree depth, so we can be assured that we need only 1 more unit of potential to assign the constructed tree potential equal to its depth. Without multiple worldviews of potential assignments to the same trees, this kind of justification would not be possible.

\paragraph{Sharing} 

Remainder contexts remove the need for the sharing constructs of AARA. In AARA, the types are linear (technically affine), in that variables can only be used once. Were they not linear, infinite potential could be mined by repeatedly reusing a variable carrying some positive amount. However, it is standard coding practice to reuse variables. To reconcile these concerns, AARA infers locations to insert a construct that \textit{shares} the potential on one variable binding between two fresh ones that refer to the same value. This conserves potential, while providing different variable bindings at each use.

The reason sharing constructs are not needed with remainder constructs is that remainder constructs lazily split off potential as needed from variables. This is actually more flexible than the sharing construct, because the sharing construct performs its split eagerly instead. As a result, remainder contexts alone can solve the sequential map problem from \cref{subsec: aara}. The type system with remainder contexts can recognize that there is no net cost to applying $f$, allowing \verb"map f" to be typed as something like $\langle L^1(\tau); 2 \rangle \rightarrow \langle L^0(\rho);2 \rangle$, with the caveat all potential remains on the input list after application. This effectively leaves the potential in place on the input list for the subsequent application of $g$, rather than eagerly removing it at the first application.

Because the sharing construct supports no other features like e.g. resource tunneling, nor has the same depth of principled relation to logics via IO-contexts \cite{hodas1994logic,cervesato1996efficient}, remainder contexts are a more precise abstraction. 

The utility of remainder contexts stands even without worldviews. Through private correspondence with other researchers, remainder context ideas have already been used in the implementation of the Nomos typechecker \cite{das2019resource}. Other linear type systems, such as those for quantum computing, may also be able to make use of this abstraction. 

\section{Type System}
\label{sec: type}

To focus on the novelties of our type system, we present a version slimmed down to only the essential parts. We restrict the potential functions considered to only be linear, rather than any more general functions that AARA can handle. We also remove discussion of simpler types like sums, products. We even exclude lists, as they are simpler than trees. Nonetheless, our system does extend to these omissions, and the implementation in \cref{subsec: imp} makes use of the full feature set.

\subsection{Types}

The base types supported in this article are given in \cref{fig: types}, and include functions, trees, and basic types like \verb"bool" and \verb"unit". Previously, we wrote types like $\langle L^1(\tau); 1 \rangle \rightarrow \langle L^0(\rho); 0 \rangle$. However, including potential annotations on these types will quickly become cumbersome, as they will be be repeated many times across many worldviews. Thus, we abstract the many annotations into their own space: the \textit{annotation context}. This is accomplished with \textit{location indices} $I(\tau)$, which are a set of strings defined in \cref{fig: iset} representing the locations on the base type $\tau$ that can hold annotations. The annotation context then maps from these location indices to potential annotations, which keeps potential bookkeeping separate from base type dynamics. A potential-carrying type in our system can then be represented by pairing type with annotation context like $\langle T(\tbool); * \mapsto 0, \mathsf d \mapsto 1, \mathsf e* \mapsto 2 \rangle$, which represents a Boolean tree carrying 0 constant potential, 1 unit of potential per depth, and 2 units of potential per element. In general, the potential such a type indicates is defined by $\Phi$ in \cref{fig: pot}. Further in this section, we detail each type.

For convenience, we also introduce some shorthand for annotation contexts. Explicitly, an annotation context $P$ takes a location index and worldview and returns a rational. For argument clarity, location index arguments will be given in parentheses, and worldview arguments in subscript. To refer to domain, the worldviews in the domain of $P$ will be indicated by $\wv(P)$ and the location indices by $\li(P)$. To project out the submappings of $P$ taking inputs extending from location index $i$, we write $\pi_i(P)$. To instead project out the submapping of all but $\pi_i(P)$, we use $\pi_{\neg i}(P)$. Finally, we extend this projection notation to sets of indices as well.

\begin{figure}
\begin{small}
$\tau ::= \basic \mid T(\tau) \mid \tau_1 \rightarrow \tau_2$
\end{small}
\caption{Base Types}\label{fig: types}
\end{figure}

\begin{figure}
\begin{small}
\begin{mathpar}
	I(\tau \rightarrow \rho) =\{ * \}  \cup \{ \mathsf a \cdot i, \mathsf c \cdot i \mid i \in I(\tau) \}\cup\{ \mathsf b 
\cdot i \mid i \in I(\rho)\}
	
	I(\basic) =\{ * \} 
	
	I(T(\tau)) = \{ * , \mathsf d \} \cup \{ \mathsf e\cdot i \mid i \in I(\tau) \}
	
	I(\cdot) = \{ * \} (\cup \{ \circ \} \textrm{ if remainder context})
	
	I(x:\tau, \Gamma) = \{x \cdot i \mid i \in I(\tau) \} \cup I(\Gamma) 
\end{mathpar}
\caption{Location Indices}\label{fig: iset}
\end{small}
\end{figure}

\begin{figure}
\begin{small}
\begin{mathpar}	
	\Phi(v: \langle \tau; P_w \rangle) = \sum_{i \in I(\tau)} P_w(i) \cdot \phi_i(v)
	
	\phi_*(v) = 1
	
	\phi_{i\neq*}(\leaf) =  \phi_{i\neq*}(\closure f V x e) = \phi_{i\neq*}(\basic) = 0 
	
	\phi_{\mathsf d}(\node {t_1} v {t_2}) = 1 + \max(\phi_{\mathsf d}(t_1), \phi_{\mathsf d}(t_2))
	
	\phi_{\mathsf e \cdot i}(\node {t_1} v {t_2}) = \phi_i(v) + \phi_{\mathsf e \cdot i}(t_1) + \phi_{\mathsf e \cdot i}(t_2)
	
	\Phi(t: \langle T(\tau);P_w \rangle) = P_w(*) + P_w(\mathsf d) \cdot \phi_\mathit{d}(t) + \sum_{s \in t} \Phi(s: \langle \tau; \pi_e(P_w)\rangle)
	
	\Phi(V: \langle \Gamma; P_w \rangle ) = P_w(*) + \sum_{v:x \in V:\Gamma} \Phi(v : \langle \Gamma(x); \pi_x(P_w) \rangle)
	
	\Phi(v: \langle \tau, P \rangle) = max_{w\in \wv(P)} \Phi(v: \langle \tau, P_w \rangle)
\end{mathpar}
\end{small}
\caption{Potential on Typed Values}\label{fig: pot}
\end{figure}

\begin{figure*}
\begin{small}
\begin{mathpar}
\inferrule[T-Relax]{
	\Gamma \mid R \vdash e: \tau \mid  S
	\\
	\Gamma \vdash P \sub R
	\\
	 \Gamma, \circ: \tau \vdash S  \sub Q
}{
	\Gamma \mid P  \vdash e: \tau \mid Q
}

\inferrule[T-Var]{
	\pi_x(P) = \pi_x(Q) + \pi_\circ(Q)
	\\
	\pi_{\neg x} (P) =  \pi_{\neg \{x,\circ\}} (Q) 
}{
	\Gamma, x:\tau \mid P \vdash x: \tau \mid Q
}

\inferrule[T-Tick]{
	P(*) - r = Q(*)
	\\
	\pi_{\neg *} (P) =  \pi_{\neg \{*,\circ\}} (Q) 
}{
	\Gamma \mid P \vdash \tick r : \unit \mid Q
}

\inferrule[T-Let]{
	\Gamma \mid P \vdash e_1 : \tau \mid R
	\\
	\Gamma, x:\tau \mid R[x/\circ] \vdash e_2 : \rho \mid Q
	\\
	\pi_x(Q) \geq 0
}{
	\Gamma \mid P \vdash \elet x {e_1} {e_2} : \rho \mid \pi_{\neg x}(Q)
}

\inferrule[T-Superposition-In]{
	\Gamma \mid P, u \mapsto P_w  \vdash e : \tau \mid Q
}{
	\Gamma \mid P \vdash e : \tau \mid Q
}

\inferrule[T-Superposition-Out]{
	\Gamma \mid P \vdash e: \tau \mid Q
}{
	\Gamma \mid P \vdash e: \tau \mid Q, u \mapsto Q_w
}

\inferrule[T-Collapse-In]{
	\Gamma \mid P \vdash e  : \tau \mid Q
}{
	\Gamma \mid P, w \mapsto R \vdash e  : \tau \mid Q
}

\inferrule[T-Collapse-Out]{
	\Gamma \mid P \vdash e:\tau \mid Q, w \mapsto R
}{
	\Gamma \mid P \vdash e:\tau \mid Q
}

\inferrule[T-Fun-Rec]{
	\Gamma, f: \tau \rightarrow \rho, x:\tau \mid Q \vdash e : \rho \mid R
	\\
	\lfloor \pi_{\neg \{x,*\}}(Q) \rfloor
	\\
	\pi_{\neg\{x,*\}}(Q) = \pi_{\neg \{x,*,\circ\}}(R)
	\\
	\pi_x(Q)[* \mapsto Q(*)] = \pi_{f \mathsf a}(Q)
	\\
	\pi_\circ(R) = \pi_{f\mathsf b}(Q)
	\\
	\pi_x(R)[ * \mapsto R(*)] = \pi_{f\mathsf c}(Q)
	\\
	\pi_\circ(P) = \pi_f(Q)
}{
	\Gamma \mid \pi_{\neg \circ} (P) \vdash \fun f x e : \tau \rightarrow \rho \mid P
}

\inferrule[T-App]{
	\exists w \in \wv(P). \; \tau \vdash \pi_x(P_w)[* \mapsto P_w(*)] \sub \pi_{f\mathsf a}(P_w) \wedge P_w \geq 0
	\\
	\pi_{x}(P)[x*\mapsto P(*)] - \pi_{f\mathsf a}(P) + \pi_{f\mathsf c}(P) = \pi_{x}(Q)[x*\mapsto Q(*)]
	\\
	\pi_{f\mathsf b}(P) = \pi_\circ(Q)
	\\
	\pi_{\neg x}(P) = \pi_{\neg\{x,\circ\}}(Q)
}{
	\Gamma, f:\tau \rightarrow \rho, x:\tau \mid P  \vdash f\;x : \rho \mid Q
}

\inferrule[T-Leaf]{
}{
	\Gamma \mid P \vdash \leaf : T(\tau)  \mid P
}

\inferrule[T-Cond]{
	\Gamma, b:\tbool \mid P \vdash e_1 : \tau \mid Q
	\\
	\Gamma, b:\tbool \mid P \vdash e_2 : \tau \mid Q
}{
	\Gamma, b:\tbool \mid P \vdash \ite b {e_1} {e_2} : \tau \mid Q
}

\inferrule[T-Node]{
	P \RHD_{s,t_1,t_2,\circ} Q
}{
	\Gamma, s:\tau, t_1 : T(\tau), t_2 : T(\tau) \mid P \vdash \node {t_1} s {t_2} : T(\tau) \mid Q
}

\inferrule[T-Match-Tree]{
	\Gamma,t: T(\tau) \mid P \vdash e_1: \rho \mid Q
	\\
	\Gamma,t: T(\tau),  s:\tau, t_1 : T(\tau), t_2 : T(\tau) \mid R \vdash e_2: \rho \mid S
	\\
	P \lhd_{s,t_1,t_2,t} R
	\\
	S' \RHD_{s,t_1,t_2,t} Q'
	\\
	\pi_{\neg t}(S') =\pi_{\neg t}(S) 
	\\
	\pi_{\neg t}(Q') =\pi_{\neg t}(Q) 
	\\
	\pi_t(S)-\pi_t(S') = \pi_t(Q)-\pi_t(Q')
}{
	\Gamma, t: T(\tau) \mid P \vdash \matcht t {e_1} {t_1} s {t_2} {e_2} :\rho \mid Q
}

\end{mathpar}
\end{small}
\caption{Type Rules}\label{fig: rules}
\end{figure*}

The function type $\langle \tau \rightarrow \rho; P\rangle$ represents a function with argument type $\tau$ and output type $\rho$. A closure $\closure f V x e$ of this type carries only constant potential at location index $*$, and otherwise uses its type to track how potential changes across its evaluation. Location indices of the form $\mathsf a \cdot i$ track the potential of the input, $\mathsf b \cdot i$ track the potential of the output, and $\mathsf c \cdot i$ track the potential of the remainder. Since both $b*$ and $c*$ track constant potential returned from the function, we make the convention that $P(b*)=0$. We also stress one feature of function types necessary for soundness: $\pi_a(P)$ must give the same mapping for every worldview. Were this value to vary, the mechanics of resource tunneling would allow sound function types to justify unsound function types, confusing the notion of a function's cost.

The tree type $\langle T(\tau); P \rangle$ represents a tree carrying elements with base type $\tau$. While the potential carried by trees can be calculated using the definition for $\phi$ in \cref{fig: pot}, we also provide an equivalent direct shortcut to the potential in the same figure: the non-constant potential carried on a tree is $P(\mathsf d)$ times its depth plus the sum of the potential of its elements as indicated by $P(e\cdot i)$. When those elements carry constant potential, this can be used to count the nodes of the tree. While we mostly allow for negative potential amounts, we do require that $P(\mathsf d)$ is always non-negative. Were it negative, the type system could unsoundly gain potential by misattributing the depth potential to the shortest path in the tree, rather than the longest.

We extend potential to contexts in a summative fashion. The location indices of contexts are merely the names of the variables contained therein, with one exception: remainder contexts have an extra location index $\circ$ acting like a variable name for the expression preceding the context. Treating the expression itself as if it were part of the remainder context is natural in our type system, as their types share the same set of worldviews, and their potential is usually summed. Finally, because contexts are already equipped with constant potential at index $*$, we use the convention that $P(x*)=0$ for every variable $x$ in the context.

\subsection{Rules}
\label{subsec: rules}

Our type rules use judgments of the following form. The judgment means that the base types given to variables in $\Gamma$ justify that the expression $e$ is of base type $\tau$. It further means that the potential assignments to $\Gamma$'s types across the worldviews in $\wv(P)$ justify the collection of potentials assigned to the remainders and expression types across each of the worldviews in $\wv(Q)$. Note that the worldviews of $P$ need not match up to the worldviews of $Q$, as one worldview may justify or be justified by multiple worldviews.
$$\Gamma \mid P \vdash e : \tau \mid Q$$

Our type rules are then given in \cref{fig: rules}, and their interesting features explained in this section. Every rule concluding $\Gamma \mid P  \vdash e: \tau \mid Q$ includes the implicit premiss that $P_u, Q_w\geq 0$ for some worldviews $u,w$. We also use the convention that arithmetical expressions on annotation contexts, like $P+Q$, are interpreted pointwise over worldviews and location indices, where the worldviews and location indices are identical for both $P$ and $Q$. However, there is one exception: arithmetic on function types instead asserts the equality of the summands and the sum. This exception comes about because the annotations on function types do not
express how much potential the function carries, but rather how it transforms potential. The arithmetic notation allows us to easily express the redistribution of carried potential, but such expression does not have the ability to make functions cheaper or more expensive.

The \textit{T-Relax} rule is used to throw away unneeded potential. To discard potential, we use a subtyping judgment defined in \cref{fig: sub}. The judgment $\tau \vdash P \sub Q$ means that the locations indices of $P$ and $Q$ are both that of $I(\tau)$, and that the annotations of $P$ impose a harder potential requirement than $Q$.  \footnote{One might also write this subtyping judgment as $\langle \tau; P \rangle \sub \langle \tau; Q \rangle$, as the pair $\langle \tau; P \rangle$ is a type in our system. However, the base types remain constant, so this notation emphasizes that the condition really applies to the annotations $P$ and $Q$, while also saving characters. We thank an anonymous reviewer for the suggestion.} This usually means that $P$ carries more potential. While losing potential with this type rule would usually make the analysis worse, it may be productive when matching up annotations from differing code branches. 

The \textit{T-Tick} rule captures the behaviour of the tick expression used to indicate cost. The programmer may insert tick expressions throughout their code to model the resource usage of whatever resource they are interested in. Wherever $r$ resources are consumed in the program, the programmer inserts $\tick r$. When $r$ is negative, this indicates that resources are returned. The type rule matches this behaviour by removing the appropriate amount of constant potential from that ambient in the typing context. 

The superposition and collapse rules are a collection of structural rules for working with worldviews. The superposition rules say that you may copy an existing worldview to evolve differently later. The collapse rules say that you may throw away unnecessary worldviews, or those with an unsalvageable potential assignment. With these rules, it is not necessary to design any other type rule to add or remove worldviews. Nonetheless, other rules may have typing synergy with the superposition and collapse rules. For instance, when typing a branching expression like \textit{T-Cond}, where each branch is optimally typed by a disjoint set of worldviews, one can first superpose the collections together, and then when typing each branch, collapse away the unneeded worldviews. This allows for a high degree of specialization in how potential is distributed, while still initially providing each branch the same set of worldviews.

The \textit{T-Fun-Rec} rules shows how to define recursive functions. The rule does not allow for any potential to be captured in the function closure, so that functions carry 0 potential. This is ensured by applying the unary relation $\lfloor \cdot \rfloor$ to the function body's typing context, which zeroes out every non-function annotation. (Functions are the
exception here for the same reason they are excepted when performing arithmetic on types.)

The \textit{T-App} rule internalizes the rule of resource tunneling. It checks that there exists a worldview $w$ with a classically valid potential assignment ($P_w \geq 0$) in which enough potential is allocated to the argument. It then proceeds with its bookkeeping by only making each world pay the net cost. 

The tree rules have the most complex potential bookkeeping, which is abstracted away with $\lhd$ or $\RHD$ relations on annotation contexts. The formal requirements of each such relation can be found in \cref{fig: bookkeep}, and mostly concern the depth potential. Whenever a tree is destructed, like in a pattern match, the $\lhd$ relation says that depth potential is split up convexly between the subtrees. Because the worst case is that all depth potential is assigned to the deepest subtree, this procedure cannot unsoundly gain potential. On the other hand, whenever a tree is constructed, either through a $\mathsf{Node}$ expression or in the remainder context of a pattern match, the $\RHD$ relation must do more work. The relation justifies depth potential by finding worldviews witnessing the extremes of assignment: one where all the potential is on the left subtree, and one where it is all on the right. One of these subtrees witnesses the maximum depth, justifying the assignment even without knowing statically which. Note that while this process for reasoning about depth does not require the addition or removal of worldviews, neither does it require the number remains constant - it is a natural heuristic to use two distinct witnessing worldviews for each single worldview justification, which will of course halve the number of worldviews in the conclusion. The meaning of these relations is detailed more in the proof in \cref{sec:soundproof}.

\begin{figure}
\begin{mathpar}

\inferrule[S-Arr]{
	P(*) \geq Q(*)
	\\
	 \tau \vdash \pi_{\mathsf a}(Q) \sub  \pi_{\mathsf a}(P) 
	\\
	\rho \vdash \pi_{\mathsf b}(P) \sub\pi_{\mathsf b}(Q)
	\\
	\tau \vdash \pi_{\mathsf c}(P)  \sub \pi_{\mathsf c}(Q) 
}{
	\tau \rightarrow \rho \vdash P \sub Q 
}

\inferrule[S-Tree]{
	P(*) \geq Q(*)
	\\
	P(\mathsf d) \geq Q(\mathsf d)
	\\
	\tau \vdash \pi_{e} (P)  \sub \pi_{e} (Q)
}{
	T(\tau) \vdash P  \sub Q
}

\inferrule[S-Basic]{
	\tau \in \basic 
	\\
	P(*) \geq Q(*)
}{
	\tau \vdash P  \sub Q 
}

\inferrule[S-Context]{
	P(*) \geq Q(*)
	\\
	\forall x \in \mathsf{dom}(\Gamma). \;  \Gamma(x) \vdash \pi_x(P)  \sub \pi_x(Q) 
}{
	 \Gamma \vdash P  \sub Q
}

\end{mathpar}
\caption{Subtyping}\label{fig: sub}
\end{figure}

\begin{figure*}
\begin{scriptsize}
\begin{align*}
P \lhd_{s,t_1,t_2,t_3} Q \equiv \; & \pi_{\neg\{s,t_1,t_2,t_3,*\}}(P) = \pi_{\neg\{s,t_1,t_2,t_3,*\}}(Q) \wedge R=\pi_{t_3}(P) - \pi_{t_3}(Q) 
\\
&\wedge \pi_s(Q)[* \mapsto R(e*)] = \pi_{t_1\mathsf e}(Q) =\pi_{t_2\mathsf e}(Q) = \pi_e(R)
\\
& \wedge R(\mathsf d) = \pi_{t_1\mathsf d}(Q) + \pi_{t_2\mathsf d}(Q) \wedge Q(*) = P(*) + R(\mathsf e *) + R(\mathsf d)
\\
P \RHD_{s,t_1,t_2,t_3} Q \equiv\;  & \forall u \in \wv(Q). \; \exists v,w \in \wv(P).\; \pi_{\neg\{s,t_1,t_2,t_3,*\}}(P_w) = \pi_{\neg\{s,t_1,t_2,t_3,*\}}(P_v) = \pi_{\neg \{s,t_1,t_2,t_3, *\}}(Q_u)
\\
& \wedge R_v = \pi_{\{s,t_1,t_2,*\}}(P_v) - \pi_{\{s,t_1,t_2,*\}}(Q_u) \wedge R_w = \pi_{\{s,t_1,t_2,*\}}(P_w) - \pi_{\{s,t_1,t_2,*\}}(Q_u) 
\\
& \wedge \pi_{s}(R_v)[*\mapsto Q_u(t_3\mathsf e*)] = \pi_{t_1 \mathsf e}(R_v) = \pi_{t_2 \mathsf e}(R_v)=\pi_{s}(R_w)[*\mapsto Q_u(t_3 \mathsf e*)]  = \pi_{t_1 \mathsf e}(R_w) = \pi_{t_2 \mathsf e}(R_w) = \pi_{t_3\mathsf e} (Q_u)
\\
& \wedge R_v(t_1\mathsf d)=  R_w(t_2\mathsf d) =Q_u(t_3 \mathsf d) \wedge R_v(t_2\mathsf d)=  R_w(t_1\mathsf d) = 0
\\
&\wedge R_v(*) = R_w(*) = Q_u(t_3 \mathsf d) + Q_u(t_3 \mathsf e *)
\end{align*} 
\caption{Annotation Relations $\lhd$ and $\RHD$}\label{fig: bookkeep}
\end{scriptsize}
\end{figure*}

While the rules given in this section are declarative, it is not too onerous to adapt them for type inference. Base types may be found via Hindly-Milner unification \cite{damas1982principal}, and the linear constraints required may be gathered and later discharged by a linear program solver. The worldviews witnessing non-negativity - those $u,w$ such that $P_u,Q_w \geq 0$ - may also be dealt with by adding non-negative linear constraints for chosen worldviews. This leaves the effective generation and choice of such worldviews as the main obstacle to practical type inference. We discuss this in \cref{subsec: imp}.

\subsection{Examples}
\label{subsec: ex}

We now show the type system in action on a few examples. To make the derivations human-readable, we make some simplifications. Firstly, we elide easily-checkable premisses, like the many numerical conditions in the rules for trees. Secondly, we leave structural rules like relaxing, superposition, and collapse implicit - one can tell where the latter two are used by the (dis)appearance of worldviews. Thirdly, we truncate superfluous information, like the full expression being typed and full context. Finally, and most substantially, we put annotations back onto types themselves, rather than separated into an annotation context. 

The notation with these annotated types can be described as follows: At a given annotation location, the annotations corresponding to differing worldviews will be given sequentially. For trees, we put per-node potential in a superscript, and depth potential in a subscript. Thus, to represent a boolean tree with 1 unit of potential per node in one worldview and 2 units of potential per depth in another, we avoid writing $\langle T(\tbool); (1,*)\mapsto 0, (2,*)\mapsto 0, (1,\mathsf d) \mapsto 0, (2,\mathsf d) \mapsto 2, (1,\mathsf e*)\mapsto 1, (2,\mathsf e*)\mapsto 0 \rangle$, and instead write $T^{1,0}_{0,2}(\tbool)$. For functions, we put the amount leftover on the input in a new subterm after a tilde, so a function typed as $\langle \tau; a \rangle \rightarrow \langle \rho; b \rangle \sim \tau'$ takes an input with potential given by the annotated type $\tau$ and leaves it with $\tau'$ after. Remainder contexts are put following the type of the expression. The only annotation we cannot associate cleanly to a type is the ambient potential of the context, so we leave it in the position of the annotation context. 

This new notation results in judgments of the form $\Gamma \mid p \vdash e : \tau; \Delta \mid q$, where $\Gamma, \Delta$ are the typing and remainder contexts of potential-annotated types, $p,q$ are the corresponding ambient potentials of the contexts, $e$ is the expression being typed, and $\tau$ is the potential annotated type of $e$. Note that this is merely a rearrangement of the typing judgment defined in \cref{subsec: rules}.

\begin{figure}
\begin{subfigure}{0.49\textwidth}
\centering
\begin{verbatim}
let rec choc wallet = 
  match wallet with 
  | Leaf -> ()
  | Node(_,t1,t2) ->
    let () = tick{2.0} in
    let () = choc t1 in
    choc t2
\end{verbatim}
\caption{buying chocolate}
\label{fig: buy}
\centering
\begin{verbatim}
let rec caramel wallet = 
  match wallet with 
  | Leaf -> ()
  | Node(_,t1,t2) ->
    let () = tick{5.0} in
    let () = caramel t1 in
    let () = caramel t2 in
    tick{-2.0}
\end{verbatim}
\caption{vending caramel}
\label{fig: vend}
\end{subfigure}
\begin{subfigure}{0.49\textwidth}
\centering
\begin{verbatim}
let buyCandy wallet = 
  let alice = wallet in
  let bob = wallet in
  let () = caramel alice in
  choc bob
\end{verbatim}
\caption{Alice and Bob}
\label{fig: anb}
\centering
\begin{verbatim}
let badBuy wallet = 
  let alice = wallet in
  let bob = wallet in
  let () = choc bob in
  caramel alice
\end{verbatim}
\caption{worse Alice and Bob}
\label{fig: bad}
\end{subfigure}
\caption{Alice and Bob program model}
\label{fig: excode2}
\end{figure}

\paragraph{Alice and Bob} To show how the type system makes use of worldviews, remainder contexts, and resource tunneling, we model the candy store scenario from \cref{sec: rt} with the code in \cref{fig: excode2}. This code models a money-holder as a tree, and the amount of money held as the amount of potential stored on each node of the tree. Our type system can be used to show that \$5 is enough money for them to buy their candy, despite the wrinkles induced by the vending machine. It will do so by typing \verb"buyCandy" to take an input of $T^5_0(\tau)$ with no additional potential.

AARA by itself can type \verb"choc" as $\langle T^2_0(\tau); 0 \rangle \rightarrow \langle \unit ;0 \rangle$ and \verb"caramel" as $\langle T^5_0(\tau); 0 \rangle \rightarrow \langle \unit ;0 \rangle$. Remainder contexts can then be used to show that \verb"choc" leaves its input with no potential remaining, and \verb"caramel" leaves its input with 3 units of potential per node. With these typings in $\Gamma$, \verb"buyCandy" can be typed with the following two-part derivation.

Part 1:
\begin{tiny}
\begin{mathpar}
\infer{
	\Gamma \mid 0 \vdash \fun {buyCandy} {wallet} {\mathtt {let} ...} : \langle T^5_0(\tau); 0 \rangle \rightarrow \langle \unit; 0 \rangle \sim  T^0_0(\tau); \Gamma \mid 0
}{ 
	\infer {
		\Gamma, wallet:T^{5,5}_{0,0}(\tau) \mid 0,0 \vdash \elet {alice} {wallet} {...} : \unit ; \Gamma, wallet:T^{0}_{0} \mid 0
	}{
		\infer {
			\Gamma, wallet:T^{2,0}_{0,0}(\tau), alice:T^{3,5}_{0,0}(\tau) \mid 0,0  \vdash \elet {bob} {wallet} {...} : \unit ; \Gamma, wallet:T^{0}_{0}, alice:T^{0}_{0}(\tau) \mid 0
		}{
			\infer {
				\Gamma, wallet:T^{0,0}_{0,0}(\tau), alice:T^{3,5}_{0,0}(\tau), bob: T^{2,0}_{0,0}(\tau) \mid 0  \vdash \elet \_  {\app {caramel} {alice}} {...}  : \unit ; \Gamma, wallet:T^{0}_{0}, alice:T^{0}_{0}(\tau),  bob: T^{0}_{0}(\tau) \mid 0
			}{
				\infer{
					\Gamma, wallet:T^{0,0}_{0,0}(\tau), alice:T^{3,5}_{0,0}(\tau), bob: T^{2,0}_{0,0}(\tau) \mid 0,0  \vdash  \app {caramel} {alice}: \unit ; \Gamma, wallet:T^{0,0}_{0,0}, alice:T^{0,2}_{0,0}(\tau),  bob: T^{2,0}_{0,0}(\tau) \mid 0,0
				}{
				}
				\\
				part \; 2
			}
		}
	}
}

\end{mathpar}
\end{tiny}

Part 2:
\begin{tiny}
\begin{mathpar}
\infer{
	\Gamma, wallet:T^{0}_{0}, alice:T^{0}_{0}(\tau),  bob: T^{2}_{0}(\tau) \mid 0 \vdash \app {choc} {bob}: \unit; \Gamma, wallet:T^{0}_{0}, alice:T^{0}_{0}(\tau),  bob: T^{0}_{0}  \mid 0
}{
}
\end{mathpar}
\end{tiny}

Part 1 of the derivation terminates with the use of resource tunneling at the application of \verb"caramel". As assumed, $\Gamma$ types \verb"caramel" as $\langle T^5_0(\tau); 0 \rangle \rightarrow \langle \unit ;0 \rangle \sim T^3_0(\tau)$ in every worldview, while \verb"alice" is left with 5 units of potential per node in the second worldview. Because \verb"alice"'s type satisfies the input of \verb"caramel" in the second worldview, \verb"alice" is justified in every worldview to pay only the net cost of 3 units of potential per node. 

Part 2 then begins after having collapsed away the second worldview, as it is no longer needed. Note how this causes the number of worldviews in a judgment to differ between its two contexts in some lines of part 1. The rest of the derivation then proceeds with no surprises.

Note that \verb"badBuy" cannot be given the same typing due to order-dependent effects on the amount of available money. Were \verb"wallet" to start with 5 units of potential per node, then only 3 units of potential per node would remain between \verb"alice" and \verb"bob" by the time \verb"caramel" is called, which is not enough. This is the correct behaviour - were Bob to buy his chocolate first, they would not have enough money to put into the vending machine. 

\paragraph{Binary Search}

To show how we can use worldviews to work with depth-based potential, consider \cref{fig: bin2}, a slight modification of the code for binary search in \cref{fig: bin}. In our modification, we simplify the argument $x$ to be baked in as 0, and we associate cost with the number of pattern matches by inserting a tick expression. Running this function should then have a worst case cost of its argument's depth plus one. This is reflected in the following multi-part derivation typing \verb"bin0" as $\alpha = \langle T^0_1(\mathbb{Z}); 1 \rangle \rightarrow \langle \tbool; 0 \rangle \sim T^0_0(\mathbb{Z})$.

\begin{figure}
\centering
\begin{verbatim}
let rec bin0 t = let _ = tick{1.0} in
  match t with 
  | Leaf -> false
  | Node(v,t1,t2) -> if 0 = v then true
    else if 0 < v then bin0 t1
    else bin0 t2
\end{verbatim}
\caption{modified binary search}
\label{fig: bin2}
\end{figure} 

Part 1:
\begin{tiny}
\begin{mathpar}
\infer{
	\cdot \mid 1 \vdash \fun {bin} {t} {\mathtt {let} ...} : \alpha; \cdot \mid 0
}{
	\infer{
		bin0: \alpha, t:T^0_1(\mathbb{Z}) \mid 1 \vdash \elet \_ {\tick {1.0}} {...} : \tbool; bin0: \alpha, t:T^0_0(\mathbb{Z}); \cdot \mid 0
	}{
		\infer{
			bin0:\alpha, t:T^0_1(\mathbb{Z}) \mid 1 \vdash \tick {1.0} : \unit; bin0:\alpha, t:T^0_1(\mathbb{Z}) \mid 0
		}{
		}
		\\
		part \;2
	}
}
\end{mathpar}
\end{tiny}

Part 2:
\begin{tiny}
\begin{mathpar}
\infer{
	bin0: \alpha, t:T^{0,0}_{1,1}(\mathbb{Z}) \mid 0,0 \vdash \matcht t {...} v {t1} {t2} {...}: \tbool; bin0:\alpha, t:T^{0}_{0}(\mathbb{Z}) \mid 0
}{
	\infer{
				bin0:\alpha, t:T^{0,0}_{0,0}(\mathbb{Z}) \mid 0,0 \vdash \mathtt{false} : \tbool; bin0:\alpha, t:T^{0}_{0}(\mathbb{Z}) \mid 0 
	}{
	}
	\\
	part \; 3
}
\end{mathpar}
\end{tiny}

Part 3:
\begin{tiny}
\begin{mathpar}
\infer{
	bin0: \alpha, t:T^{0,0}_{0,0}(\mathbb{Z}), v:\mathbb{Z}, t1:T^{0,0}_{1,0}(\mathbb{Z}),  t2:T^{0,0}_{0,1}(\mathbb{Z}) \mid 1,1 \vdash \ite {0=v} {\mathtt {true}} {...} : \tbool; bin0:\alpha, t:T^{0}_{0}(\mathbb{Z}), v:\mathbb{Z}, t1:T^{0}_{0}(\mathbb{Z}),  t2:T^{0}_{0}(\mathbb{Z}) \mid 0
}{
	\infer{
		bin0: \alpha, t:T^{0,0}_{0,0}(\mathbb{Z}), v:\mathbb{Z}, t1:T^{0,0}_{1,0}(\mathbb{Z}),  t2:T^{0,0}_{0,1}(\mathbb{Z}) \mid 1,1 \vdash \ite {0<v} {...} {...} : \tbool; bin0: \alpha, t:T^{0}_{0}(\mathbb{Z}), v:\mathbb{Z}, t1:T^{0}_{0}(\mathbb{Z}),  t2:T^{0}_{0}(\mathbb{Z}) \mid 0
	}{
		\infer{
			bin0: \alpha, t:T^{0}_{0}(\mathbb{Z}), v:\mathbb{Z}, t1:T^{0}_{1}(\mathbb{Z}),  t2:T^{0}_{0}(\mathbb{Z}) \mid 1 \vdash \app {bin0} {t1} :\tbool ; bin0: \alpha, t:T^{0}_{0}(\mathbb{Z}), v:\mathbb{Z}, t1:T^{0}_{0}(\mathbb{Z}),  t2:T^{0}_{0}(\mathbb{Z}) \mid 0
		}{
		}
		\\
		part\; 4
	}
}
\end{mathpar}
\end{tiny}

Part 4:
\begin{tiny}
\begin{mathpar}
\infer{
	bin0:\alpha, t:T^{0}_{0}(\mathbb{Z}), v:\mathbb{Z}, t1:T^{0}_{0}(\mathbb{Z}),  t2:T^{0}_{1}(\mathbb{Z}) \mid 1 \vdash \app {bin0} {t2}:\tbool  ; bin0:\alpha, t:T^{0}_{0}(\mathbb{Z}), v:\mathbb{Z}, t1:T^{0}_{0}(\mathbb{Z}),  t2:T^{0}_{0}(\mathbb{Z}) \mid 0
}{
}
\end{mathpar}
\end{tiny}

Part 1 of the derivation proceeds normally, paying the cost of the pattern match. Then in part 2, a second worldview is superposed prior to the pattern match. In part 3, these worldviews diverge, as the depth potential on $t$ is split in different ways between them - the first worldview puts all depth potential on $t1$, and the second puts it all on $t2$. We then see the benefit of this in the terminal steps of parts 3 and 4, where each branch collapses away a different worldview for the recursive call. The use of worldviews has allowed each branch to individually specialize the allocation of potential in the most beneficial way possible. 

Specializing potential to branches is particularly useful for working with depth, as seen above. This is because depth relates to specifically one path through a tree: the deepest path. Worldviews allow the allocation of potential to cover each path individually, including the deepest path. Without worldviews, AARA overapproximates by covering all paths simultaneously, even though only one path may be used in execution. This overapproximation associates a cost bound for the above code proportional to the total number of $t$'s nodes, which can be exponentially worse than the bound based on $t$'s depth.

\section{Soundness}
\label{sec: sound}

The type system, with all of its new quantum reasoning principles, is sound. To show this, we adopt the operational semantics used with AARA, and use it to formally assign high-water mark and net cost to code. We then prove that the potential assigned by our type system in \cref{fig: pot} satisfies the invariants of the physicist's method with respect to that cost. This means that standard physicist method reasoning applies, and our type system successfully bounds the program's cost. Because this is proven for general usage of the tick operation, it is sound not only with respect to stack costs, but any cost model implementable in AARA. We formally state our soundness theorem in \cref{subsec: proof}.

\subsection{Operational Semantics}
\label{subsec: op}

The operational semantics we use has been used with AARA in the past; see e.g. \cite{kahn2020exponential}. This operations semantics uses big-step judgments of the following form. As a precondition for our soundness theorem, we assume that we can derive such a judgment for the expression at hand.
\begin{mathpar}
	V \vdash e \Downarrow v \mid (p,q)
\end{mathpar}

This statement means that, under the mapping $V$ from variables to values, the expression $e$ evaluates to the value $v$ with cost described by $p,q$. The rational $p$ gives the high-water mark of cost, and $q$ gives the returned resources at the end, such that $p-q$ gives the net cost. See \cref{fig: cost} for the full set of rules, given for expressions in let-normal form.

\begin{figure*}[ht]
  \def \MathparLineskip {\lineskip=0.33cm}    
\begin{mathpar}
\fontsize{8}{12}
      	\infer[\mathit{Tick}]{
		V \vdash \mathit{tick}\{r\} \Downarrow () \mid (p,q) 
	}{
		p = max(r,0)
		&
		q = max(-r,0)
	}
    
    \infer[\!\!\mathit{Var}]{
        V \vdash x \Downarrow v \mid (0,0)
    }{
    	V(x) =v
    }

    \infer[\mathit{Let}]{
        V \vdash \elet x {e_1} {e_2} \Downarrow v_2 \mid (p+max(p'-q,0),q'+max(q-p',0))
    }{
        V \vdash e_1 \Downarrow v_1 \mid (p,q)
        & 
        V[x \mapsto v_1] \vdash e_2 \Downarrow v_2 \mid (p',q')
    }
    
        \infer[\mathit{CondT}]{
            V \vdash \ite {x_b} {e_t} {e_f} \Downarrow v \mid (p,q)
        }{
           	V(x_b) = \mathit{true}
		& 
		V \vdash e_t \Downarrow v \mid (p,q)
        }

        \infer[\mathit{CondF}]{
            V \vdash \ite {x_b} {e_t} {e_f} \Downarrow v \mid (p,q)
        }{
           	V(x_b) = \mathit{false}
		& 
		V \vdash e_f \Downarrow v \mid (p,q)
	}

        \infer[\mathit{App}]{
            V \vdash f\; x \Downarrow v  \mid (p,q)
        }{
        	    V(x) = v_x
	    &
            V(f) = \closure g {V'} {x'} {e}
	    &
            V'[x' \mapsto v_x, g \mapsto \closure g {V'} {x'} {e}] \vdash e \Downarrow v \mid (p,q)
        }
        
        \infer[\mathit{Fun}]{
        		V \vdash \fun f x e \Downarrow \closure f V x e \mid (0,0)
        }{
        }
        
          \infer[\mathit{Node}]{
           V \vdash \node {t_1} {a} {t_2} \Downarrow \node {v_1} {v_a} {v_2}\mid  (0,0)     
        }{
        	V(t_1) = v_1
	&
	V(t_2) = v_2
	& 
	V(a) = v_a
        }

        \infer[\!\!\mathit{TMatchL}]{
           V \vdash  \matcht x {e_1} {t_1} a {t_2} {e_2} \Downarrow v \mid  (p,q)    
        }{
            V(x) = \leaf
            &
            V \vdash e_1 \Downarrow v \mid (p,q)
        }

        \infer[\!\!\mathit{TMatchT}]{
            V \vdash  \matcht x {e_1} {t_1} a {t_2} {e_2} \Downarrow v \mid  (p,q)     
        }{
            V(x) = \node {v_1} {v_a} {v_2} 
            &
            V[t_1 \mapsto v_1, t_2 \mapsto v_2, a \mapsto v_a] \vdash e_2 \Downarrow v \mid (p,q)
        }
        
         \infer[\mathit{Leaf}]{
            V \vdash \leaf \Downarrow \leaf \mid (0,0)
        }{
        }
	\end{mathpar}
	\caption{Operational Cost Semantics Rules}
    \label{fig: cost}
\end{figure*}

We connect the values used in the operational semantics with our types using the judgment $v:\langle \tau; P\rangle$. The judgment means that the value $v$ can be given the type $\langle \tau; P \rangle$. For non-closure values $v$, this judgment is independent of $P$, and simply requires the base type to match. However, for closures to be typed as functions, it must further be the case that the body of the closure can be appropriately typed with our potential-carrying types. This includes maintaining the condition on function types that the input potential annotations are invariant across all worldviews of $P$. We extend this judgment across contexts pointwise.

\subsection{Proof Strategy} 
\label{subsec: proof}

The soundness of our type system hinges on the fact that its potential satisfies the invariants of the physicist's method. That is, taking potential as the max across the potential in each worldview, as defined in \cref{fig: pot}, we would like the initial potential to bound the high-water mark cost, and the difference in potential to bound the net cost. More formally:

\begin{theorem}[Soundness] If $V\vdash e \Downarrow v \mid (p,q)$ for $V:\langle \Gamma; P \rangle$ - so that the high-water cost is $p$ and $p-q$ is the net - then whenever our type system derives $\Gamma \mid P \vdash e : \tau \mid Q$, we find:
\begin{align*}
	p \leq & \Phi(V : \langle \Gamma; P \rangle)
	\\
	 p-q \leq & \Phi(V : \langle \Gamma; P\rangle) - \Phi(V,v : \langle \Gamma,\circ:\tau; Q \rangle )
\end{align*}
\end{theorem}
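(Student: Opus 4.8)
The plan is to prove the two displayed inequalities simultaneously, as a joint invariant, by induction on the derivation of the evaluation judgment $V \vdash e \Downarrow v \mid (p,q)$. The complication is that the rules \textit{T-Relax}, \textit{T-Superposition-In/Out} and \textit{T-Collapse-In/Out} leave $e$ fixed, so a given $e$ may be typed by a derivation ending in one of these structural rules rather than the syntax-directed rule for its head constructor. I would therefore nest, inside each evaluation case, a secondary induction on the typing derivation $\Gamma \mid P \vdash e : \tau \mid Q$ whose only job is to peel off structural rules; once the last typing rule is syntax-directed I invoke the evaluation IH. Everywhere I use that $\Phi(\cdot)=\max_w \Phi(\cdot_w)$ is (i) monotone in each worldview under the subtyping order $\sub$ of \cref{fig: sub}, (ii) unchanged by duplicating a worldview, and (iii) monotone under adding a worldview to a context (it can only raise the max). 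The implicit premiss that some $P_u,Q_w\geq 0$ ensures the relevant maxima are realised by genuinely non-negative potentials, which is what lets us drop other summands when lower-bounding.

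The structural rules are dispatched uniformly with (i)--(iii). For \textit{T-Relax}, $\Gamma\vdash P\sub R$ gives $\Phi(V{:}\pot\Gamma P)\geq\Phi(V{:}\pot\Gamma R)$ and $\Gamma,\circ{:}\tau\vdash S\sub Q$ gives $\Phi(V,v{:}\pot{\Gamma,\circ{:}\tau}{S})\geq\Phi(V,v{:}\pot{\Gamma,\circ{:}\tau}{Q})$, so the premise's high-water bound and net bound both weaken correctly into the conclusion (more initial potential still bounds $p$; a larger initial minus a smaller final still bounds $p-q$). For the superposition rules the added/removed worldview is a copy, so no maximum moves; for the collapse rules one either enlarges the input max or shrinks the output max, each of which preserves both inequalities. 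This reduces the theorem to the syntax-directed rules.

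For the syntax-directed rules I chain the max-based IHs. The base cases \textit{T-Var}, \textit{T-Leaf}, \textit{T-Tick} carry cost $(\max(r,0),\max(-r,0))$ (with $r=0$ for the first two) and only conserve or shift constant potential, so the inequalities follow from the arithmetic side-conditions. The key compositional case is \textit{T-Let}: with $e_1$ of cost $(p,q)$ and $e_2$ of cost $(p',q')$ the evaluation cost is $(p+\max(p'-q,0),\,q'+\max(q-p',0))$. The IH for $e_1$ gives $\Phi(V{:}\pot\Gamma P)\geq p$ and $\Phi(V{:}\pot\Gamma P)-\Phi(V,v_1{:}\pot{\Gamma,\circ{:}\tau}{R})\geq p-q$; the IH for $e_2$, whose input context $R[x/\circ]$ merely renames $\circ$ to $x$ so that its input potential equals $\Phi(V,v_1{:}\pot{\Gamma,\circ{:}\tau}{R})$, gives $\Phi(V,v_1{:}\pot{\Gamma,\circ{:}\tau}{R})\geq p'$ and the matching net bound. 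Adding the two net bounds telescopes to $\Phi(V{:}\pot\Gamma P)\geq (p-q)+p'$, which with $\Phi(V{:}\pot\Gamma P)\geq p$ yields the high-water bound $p+\max(p'-q,0)$; adding them the other way yields the net bound, after using $\pi_x(Q)\geq 0$ together with $\phi_i\geq 0$ to see that passing to $\pi_{\neg x}(Q)$ only lowers the final potential. The tree rules run the same chaining, with the relations $\lhd$ and $\RHD$ of \cref{fig: bookkeep} supplying the potential algebra: on destruction $\lhd$ splits depth potential convexly, and since $\phi_{\mathsf d}(\node{t_1}v{t_2})=1+\max(\phi_{\mathsf d}(t_1),\phi_{\mathsf d}(t_2))$ a convex split cannot gain potential; on construction $\RHD$ furnishes, for each output worldview $u$, two input worldviews placing all depth potential on $t_1$ resp.\ $t_2$, one of which attains the maximising subtree and hence dominates $\Phi(\node{v_1}{v_a}{v_2}{:}\pot{\Gamma}{Q_u})$, so the input max dominates the output max worldview-by-worldview.

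The main obstacle is \textit{T-App}, which internalises resource tunneling, because I must reconcile the \emph{max}-based body IH with the fact that the body's input potential is pinned across worldviews. For the high-water bound I use the witness worldview $w$: since $P_w\geq 0$ and $\tau\vdash\pi_x(P_w)[*\mapsto P_w(*)]\sub\pi_{f\mathsf a}(P_w)$, the argument carries at least the $\mathsf a$-potential the body consumes, so $\Phi(V{:}\pot\Gamma{P_w})\geq p$ by the body IH and thus $\Phi(V{:}\pot\Gamma P)\geq p$. For the net bound the crucial structural fact is that $\pi_{f\mathsf a}(P)$ is worldview-invariant, so the body's input potential is a single value rather than a varying maximum; consequently the body IH's net bound specialises to a \emph{per-worldview} inequality: for every $u$, the drop $\Phi(V{:}\pot\Gamma{P_u})$ minus the corresponding final potential equals the $\mathsf a$-potential consumed minus the $\mathsf b$- and $\mathsf c$-potentials produced in worldview $u$ (by the remaining potential-flow equations), which is at least the body's max-based drop, hence at least $p-q$. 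A per-worldview inequality holding for all $u$ lifts to the required inequality on maxima, since $a_u\geq b_u+c$ for all $u$ forces $\max_u a_u\geq \max_u b_u+c$. This is precisely the principle that a single witness worldview vouches for the high-water mark while every worldview pays only the net cost; were $\mathsf a$ allowed to vary, a high-$\mathsf a$ worldview could lend spurious potential and this specialisation would fail, which is why I expect the worldview-invariance of $\mathsf a$ to be the linchpin of the argument. Finally, recursion through \textit{T-Fun-Rec} poses no circularity: each recursive call is a strictly smaller evaluation sub-derivation, already covered by the evaluation-induction IH, while the closure's body typing is handed to us by the value-typing judgment $V{:}\pot\Gamma P$.
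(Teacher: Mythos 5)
Your proposal is correct in substance and walks through the same case analysis with the same key insights (the witness worldview for the high-water mark at \textit{T-App}, the worldview-invariance of the $\mathsf a$-annotation for the net bound, the two extreme depth-allocation worldviews at \textit{T-Node}), but it carries a genuinely different induction statement than the paper. The paper does not induct on the max-based theorem directly: it strengthens it to a per-worldview ``Worldview Soundness'' lemma --- $\exists w \in \wv(P).\; p \leq \Phi(V : \pot{\Gamma}{P_w})$ for the high-water mark, and $\forall u \in \wv(Q).\; \exists w \in \wv(P).\; p-q \leq \Phi(V : \pot{\Gamma}{P_w}) - \Phi(V,v : \pot{\Gamma,\circ:\tau}{Q_u})$ for the net cost --- proves that by the same lexicographic induction on (cost derivation, typing derivation), and recovers the theorem as a corollary by instantiating $u$ at the worldview maximizing the output potential. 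You instead keep the weaker $\max$-based statement as the IH throughout and recover worldview-level facts locally wherever maxima are too coarse: in \textit{T-Let} and \textit{T-Match-Tree} your chaining works only because the intermediate maxima telescope exactly ($e_1$'s remainder context is literally a renaming of $e_2$'s input context, so the same quantity $\Phi(V,v_1 : \pot{\Gamma,\circ:\tau}{R})$ cancels), and in \textit{T-App} and \textit{T-Node} the per-worldview inequalities come not from the IH but from the rule premisses and the closure's value typing --- the pinned $\mathsf a$-annotation makes the body IH per-worldview for free (the paper makes the same move by asserting the body typing has $|\wv(P')|=1$), and your lifting ``$a_u \geq b_u + c$ for all matched $u$ implies $\max_u a_u \geq \max_u b_u + c$'' is exactly the corollary step the paper performs once, globally. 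What each route buys: yours avoids the $\forall/\exists$ witness-plumbing that the paper's lemma demands (e.g., in its let case the paper threads $e_2$'s existential witness in $\wv(R)$ into $e_1$'s universal quantifier), at the price of fragility --- the $\max$-based IH is strictly weaker, and it survives only because every rule either relates worldviews pointwise or composes through a shared intermediate annotation context; a rule composing sub-derivations across unmatched worldview sets without exact telescoping would break your IH while the paper's per-worldview lemma would still apply, which is presumably why the paper remarks that the strengthened statement's ``extra structure allows one to prove it by nested induction.'' One point to make explicit if you write this up: for positive ticks (and for \cref{ex}-style bounds at var, leaf, and fun) you need the implicit side condition that some \emph{remainder} worldview satisfies $Q_w \geq 0$, not just the input-side condition; the paper invokes exactly this, via its non-negativity lemma, in the tick case.
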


In order to prove this statement, we actually prove the given conditions entail a slightly different consequent. We then find that our desired statement follows as a corollary. The consequent we prove is: Firstly, there exists a worldview wherein the typing context potential is sufficient to cover the high-water mark cost. Secondly, for all worldviews across the remainder context, there exists a worldview in the typing context such that the change in potential between them is sufficient to cover the net cost. Formally, this is the following statement:
\begin{align*}
\exists w \in \wv(P). \; & p \leq \Phi(V : \langle \Gamma; P_w \rangle)
\\
\forall u \in \wv(Q). \; \exists w \in \wv(P). \; & p-q \leq \Phi(V : \langle \Gamma; P_w \rangle)
\\ & - \Phi(V,v : \langle \Gamma,\circ:\tau; Q_u \rangle )
\end{align*}

Unlike the original soundness statement, this new statement's extra structure allows one to prove it by nested induction over typing and cost derivations. Such induction is relatively routine, with the most interesting cases being the justification of new kinds of potential, like depth. In such cases, the rules pick out multiple worldviews that are meant to bear witness to the potential at the extremes of each allocation. For instance, for tree depth, one worldview assigns all the potential to the left subtree, and the other to the right. One of these must achieve the maximum depth, and this will yield the desired potential bound. 

With this new statement proven, our desired soundness statement follows. The maximum potential is at least that of the existential witness for high-water mark, so the maximum potential will also be sufficient to cover the high water mark. And the difference between max potentials will similarly bound the net cost. This recovers the use of the physicist's method, but with potential that behaves non-locally. The full proof can be found in \cref{sec:soundproof}.

\section{Related Work}
\label{sec: related}

No work has previously explored quantum refinements of the physicist's method, nor made use of techniques like worldviews or resource tunneling in the context of resource analysis. Advantages of our techniques include support for amortization and non-monotone resources, natural compositionality of a type system, automatability, the formal soundness proof w.r.t. a cost semantics, and type derivations that are certificates of bounds.

Most closely related is the previous work on AARA, which has been
introduced by Hofmann and Jost~\cite{hofmann2003static} to
automatically derive linear bounds on the heap-space consumption of
first-order functional programs.
It has successively been generalized to support other
resources~\cite{Jost09}, higher-order functions~\cite{Jost10},
probabilistic programs~\cite{wang2020raising}, and more complex bounds
such as polynomials~\cite{HoffmannAH12,hoffmann2017towards} and
exponential functions~\cite{kahn2020exponential}.
AARA can also be applied to imperative
languages~\cite{CarbonneauxHZ15,NgoCH17}.
In this work, we specifically focus on non-monotone resources,
introduce worldviews and resource tunneling, and show how this
improves on standard AARA in \cref{sec: ine}.

%that one that uses bunched contexts for depth
One previous work on AARA uses bunched logic to reason about tree depth in AARA-style \cite{campbell2009amortised}.
This approach generates a new kind of constraint based on axiomatic
equivalence relations on different bunches of types in the typing
context. Then, similarly to our work, their constraint system
approximates maxima using convex combinations. 
One limitation of the bunched approach is that the programmer must
indicate to the program whether tree arguments use depth- or
node-based potential, whereas our approach can infer any combination
of both simultaneously.

Another relevant work uses "give-back" annotations on types in order to recover potential for later use after temporarily allocating it elsewhere \cite{campbell2008type}. This achieves a similar goal as our remainder contexts, which return remaining potential on certain variable bindings back to the potential's source after the bindings fall out of scope. The "give-back" annotations accomplish this by pre-determining how much potential to return, and then using third-party safety analyses to determine when the temporary allocation's value is no longer in use. The "give-back" type rules cannot reason about reusing potential by themselves, and also require that the resources must be completely returnable, like memory. In contrast, remainder contexts support more general resources, and do so uniformly as a part of the type system.

Potential-based methods are also used in other techniques. A recent
work~\cite{cutler2020} uses the physicist's method to combine
amortized analysis and recurrence solving.
Other works have proposed powerful program
logics~\cite{gueneau2018fistful,timecredits,atkey2010amortised} with
credits that are similar to potential annotations.
In contrast to our focus on automatic analysis, these logic are mainly
designed for manual resource analyses.

%list some, mention type composability
Other type-based techniques for resource analysis include linear dependent types \cite{dal2013geometry,dal2011linear}, refinement types \cite{wang2017timl,POPL:RBG18}, and other annotated type systems \cite{POPL:CW00,POPL:Danielsson08}.
Some of these approaches, like AARA, implement a linear type system
with specialized constraints, and compose analyses via their
types. 
The richer the type system, the more
likely it will require programmer intervention in order to solve its
constraints. Some of type techniques are also specialized to handle
non-monotonic costs like space, like the sized types of
\cite{vasconcelos2008space}.

Outside of the functional setting the capabilities of techniques are harder to compare. Imperative and object-oriented methods are based in numerical analyses on a program’s integers, and do not handle data structures. Most also ignore non-monotonic resources. The imperative work on SPEED \cite{gulwani2009speed,speedloop} comes close by reducing data structures to numerical analysis, but they require manual definition and verification of the numerical reduction, and do not handle non-monotone resources. The imperative work in \cite{albert2015non} does handle non-monotonic resources automatically, but only handles integers and operates with abstract execution. The object-oriented work on COSTA \cite{albert2007cost} also comes close, but can have difficulty reasoning about conditional guards and sums the cost over branches, whereas our work doesn’t reason about guards at all and optimizes to the max branch. The object-oriented work of \cite{hofmann2006type, hofmann2013automatic} uses "views" with the physicist’s method, but its views deal with aliases and serve no role like our worldviews. Moreover, it is difficult to automate.

Other techniques for resource bound analysis range from recurrence solving
\cite{POPL:KML20,danner2015denotational,kincaid2017compositional}, to
term re-writing
\cite{RTA:AM13,hirokawa2008automated,avanzini2015analysing,naaf2017complexity,noschinski2013analyzing},
to loop analysis \cite{speedloop,kincaid2017non,LPAR:BHH10}, and more
\cite{chatterjee2019non,lopez2018interval}. No such approach is similar to ours.
%
%These works have mainly focused on monotone resources like time.
%

\section{Implementation \& Evaluation} 
\label{sec: ine}

To test the efficacy of the developed type system, we implemented it in a prototype analyzer and compare its performance to the pre-existing implementation of AARA in RaML. We test both how accurately and how fast the two can analyze naive stack bounds for the OCaml standard library Set module. Our type system's new capabilities greatly increases the accuracy of its analysis, with only moderate performance tradeoff.

\begin{table*}
\begin{tiny}
 \begin{tabular}{c c | c c c c | c c c c} 
 && RaML & & & & Prototype & & &
 \\
 \hline
 Function & LoC & Time(s) & Constrs & Stack Bound & Returned & Time(s) & Constrs & Stack Bound & Returned
 \\ 
 \hline
 ordcompare & 3 & 0.01 & 3& 0 & 0 & 0.00 & 23 & 0 &0
 \\
 height &5 & 0.01 & 8 & 0 & 0& 0.00 & 214 &0 & 0
 \\
 create & 4& 0.02 &31 & 0 & 0 & 0.04 & 10722 & 0 & 0
 \\
 bal & 44& 0.17 & 505& 1 & 1 & 3.67 & 513451& 1 & 1
 \\
 add &62 & 1.11 & 1085 &  $n+1$ & 1 & 7.33 & 697545 & $d+1$ & $d+1$
 \\
 singleton & 1& 0.00 & 2 & 0 & 0& 0.00&548& 0 & 0
 \\
 add$\_$min$\_$elt & 52 & 0.45 & 532 & $n+1$ & 1 & 3.69 &538691& $d+1$ & $d+1$
 \\
 add$\_$max$\_$elt & 52 & 0.47 & 534 & $n+1$ & 1 & 3.85 &538763& $d+1$ & $d+1$
 \\
 join & 71& 4.29 & 2197 & $n_0+n_1+2$ & 1 & 6.43 &764565& $d_0 + d_1 + 2$ & $d_0 + d_1 + 2$
 \\
 min$\_$elt &9& 0.01 & 26 &$ n$ & 1 & 0.01 &3616& $d$ & $d$
 \\
 min$\_$elt$\_$opt &9& 0.01 & 32 & $n+1$ & 1 & 0.01 &4224&$d$ &$d$
 \\
 max$\_$elt &9& 0.01 &26 & $n$ & 1 & 0.01 &3616& $d$ & $d$
 \\
 max$\_$elt$\_$opt &9& 0.01 & 32 & $n+1$ & 1 & 0.01&4224 &$d$ &$d$
 \\
 remove$\_$min$\_$elt &54& 0.46 & 540 & $n$ & 1 & 3.68 &539700& $d$ & $d$
 \\
 merge &75& 0.81 & 1124 & $2n_1+1$ & 1 & 6.36 &601655& $d_1+1$ & $d_1 + 1$
 \\
 concat &102& 5.96 & 2816 & $n_0+2n_1+1$ & 1 & 10.57 &903682& $n_0  + n_1 + .5d_1 + 4.5$ & $n_0'+4$
 \\
 split &91& 17.46 & 4447 & fail & fail & 11.72 &941207& $n_0+4$ & $n_0'+n_1'+4$
 \\
 is$\_$empty & 1 & 0.01 & 8 & 0 & 0 & 0.01 &160& 0 & 0
 \\
 mem & 10 & 0.02 & 34 & $n$ & 0 & 0.03 &12566& $d$ &$ d$
 \\
 remove &96 & 5.03 & 2211& $2n+1$ & 1 & 8.56 &880667& $d+1$ & $d+1$
 \\
 union &127& 164.91 & 15568 & fail & fail & 25.14 &1586889& fail & fail
 \\
 inter &137& 94.03 & 9532 & fail & fail & 24.33 &1486736& $n_0+n_1+5$ & $n'+5$
 \\
 diff & 137 & 95.01 & 9535 & fail & fail & 21.40 & 1433676 &  $n_0+n_1+5$ & $n'+5$
 \\
 cons$\_$enum & 7 & 0.01 & 27 & $n_0+1$ & 1 & 0.02 & 6338 & $d_0$ & $d_0$
 \\
 compare$\_$aux & 27 & 0.13 & 509 & $n_0s_0{+}n_1s_1{+}s_1{+}1$ & 1 & 0.21 &61545& $.5s_0 {+} 1.5n_0 {+} .5s_1{+}1.5n_1{+}1$ & 1
 \\
 compare & 30 & 0.17 & 739 & $n_0+2n_1+2$ & 1 & 0.29 & 82131 & $1.5n_0+1.5n_1+2$ &	2
 \\
 equal & 33 & 0.19 & 745 &  $n_0+2n_1+3$ & 0 & 0.30 &87239&$1.5n_0+1.5n_1+3$ & 3
 \\
 subset &20 & 0.42 & 1607 & $n_0n_1$ & 0 &0.61 &156721& $d_0+d_1$ & $d_0+d_1$
 \\
 iter & 4 & 0.01 & 28 & $n+1$ & 1 & 0.04 &11161& $d$ & $d$
 \\
 fold & 4 & 0.02 & 28& $n+1$ & 1 & 0.04 &15726&$d$ & $d$
 \\
 for$\_$all & 3 & 0.02 & 32 & $n$ & 0 & 0.04 &11581&$d$ & $d$
 \\
 exists & 3 & 0.02 & 32& $n$ & 0 &0.04 &11581&$d$ & $d$
 \\
 filter & 115 & 30.88 & 5074 & fail & fail & 61.35 &2170433&$.5n^2+.5n+5$ & $n'+5$
 \\
 partition & 114 & 74.91 & 10076 & fail & fail & 14.33&1158301& $2n+5$ & $n_0'+n_1'+5$ 
 \\ 
 cardinal & 3 & 0.01 & 27& $n$ & 0 &0.01 &3763&$d$ & $d$
 \\
 elements$\_$aux & 4 & 0.01 & 32& $n_1$ & 1 & 0.02 &6953&$d$ & $d$
 \\
 elements & 7 & 0.01 & 36& $n+2$ & 1 & 0.03 &8182 &$d+1$ & $d+1$
\end{tabular}
\end{tiny}
\caption{Experimental Statistics and Inferred Stack Bounds}
\label{tab: stats}
\end{table*}

\subsection{Implementation} 
\label{subsec: imp}

Our typechecker implements the type rules presented in \cref{sec: type}, and more. The implementation extends the type rules to include support for some new code features, including the types and expressions of products, sums, and lists. It also extends tree potential from strictly linear functions of tree depth and nodes, to polynomial potential thereof. Though polynomial potential is mostly unneeded for our tests, it does get used in analyzing $\mathsf{filter}$, and its inclusion shows that our typesystem extends unproblematically to more advanced AARA features.

The typechecker works over code in a couple passes to perform its analysis. The first pass converts input code to let-normal form, so that type rules with fewer moving parts can be applied - our presentation of the rules in this work makes the same simplification. The second pass converts the code's abstract syntax tree into a type derivation tree, which is used to generate constraints for standard Hindley-Milner unification \cite{damas1982principal}. The result of unification is then converted into a derivation tree for the base type of the expression. Finally, this derivation tree is decorated with potential annotations and used to generate linear constraints. Using an off-the-shelf LP solver on the linear constraints then gives the resource bounds of the code, finishing the analysis.

To perform the necessary accounting for worldviews, the typechecker calculates how many worldviews it might need at each point in the program. After calculating these numbers, the typechecker superposes that many worldviews from the start. Once these worldviews are in place, it then becomes unnecessary to manipulate worldviews via superposition rules anywhere else in the derivation. Use of the collapse rule can also be confined to predetermined locations, such as branches. So, to calculate an upper bound on the number of worldviews needed, the typechecker makes use of the following observations. This calculation is rather pessimistic to capture the worst case for the analysis, and more aggressive heuristics could improve performance.

Firstly, every subexpression needs a worldview in which every annotation is non-negative, so as to prove that the total potential is non-negative. The typechecker can dictate which world should fulfill this role by generating the appropriate non-negative linear constraints, and the LP-solver will fill in the details. Conservatively, the typechecker could then operate by allocating a new such non-negative worldview for every subsexpression. However, one can observe that there are only two kinds of expressions that can turn a non-negative annotation into negative one: function applications (which subtract the net cost from their argument) and positive ticks (which subtract the tick amount from the constant potential). Thus, the typechecker reuses non-negative worldviews for every other kind of expression.

Secondly, some rules use two worldviews of their premisses to justify one in their conclusion, specifically those rules concerning the depth potential of trees. At worst, this means that the number of worldviews might need to double at this subexpression to capture the most general analysis possible.

Thirdly, expressions that branch might use different worldviews in each branch. The typechecker again assumes the worst here, and bounds how many worldviews are needed in each branch, and sums the bounds together. (Note that when typing each branch, the collapse rules can be used to remove the extra worldviews.) At worst, this also might double the number of worldviews needed to capture the most general analysis possible.

Finally, function types are slightly simplified by requiring that they have the same annotations in every worldview. This allows helper functions to be typed independently, at the tradeoff of not allowing functions to have different allocations of potential among their return types. In practice, this capability is not usually important to the analysis, but the tradeoff enables a large performance benefit: Namely, the exponential growth of the worldview count is contained to a single function body. So long as code is broken into helper functions, as is often the case, the exponential growth mitigated. This is supported by our performance evaluation on real-world OCaml code in \cref{subsec: eval}, which achieves reasonable times.

To summarize this worldview counting, the number of worldviews needed at a point in a program can be no worse than the number of distinct \textit{non-recursive} execution paths extending from that point, which is exponential in the expression's depth. By \textit{non-recursive}, we refer to ignoring recursive calls along the paths, which is justified because because recursive calls do not need to be retyped. While in practice one can usually derive types using many fewer worldviews than this exponential count, our implementation does not try to make this optimization. One might also see this count as arising for similar reasons to the exponential state space maintained by qubits in quantum computing.

One final optimization our implementation makes is to only apply the relax rule when branches occur. This is because the only good reason to throw away potential in the analysis is to allow the return type and remainder contexts of each branch to match in cases where one branch was more resource-efficient than the other. By including this optimization, significantly fewer non-trivial constraints generated for the LP, and the expressivity of the analysis is unaffected. 

\subsection{Evaluation}
\label{subsec: eval}

\paragraph{Experimental Setup}

To evaluate our bound inference, we compare it against RaML's bound inference, which is based on standard AARA. As test code, we use the Set module from OCaml's standard library \cite{setmodule}, which implements sets using binary trees. We compare the inferred naive stack bound for each function (if any), and measure the two implementations' performances. We find that our under-optimized prototype is able to infer significantly more precise bounds with only moderate overhead compared to RaML.

We use a naive stack metric that simply counts the number of needed
stack frames without accounting for tail-calls or other optimization.
When analyzing a function, each typechecker must also analyze all helper functions, which sometimes exceeds one hundred of lines of code. We time the implementations after they give the code base types, but include the time it takes to solve the linear programs they generate. To even the playing field, both typecheckers use COIN-OR's open source LP solver \cite{coinor} to solve their linear constraints. For parsing reasons, some of OCaml's syntactic sugar was manually desugared.

When gathering the experimental data, each implementation is run at the lowest degree setting that successfully analyzes the code. In the event of failure, the data from the linear bound search is used. This means that each usually only searches for linear bounds. Nonetheless, our implementation uses quadratic bounds for $\mathsf{filter}$, and RaML uses them for 4 functions ($\mathsf{compare\_aux}$ through $\mathsf{subset}$). These cases also show off a capability of RaML not implemented in our prototype: \textit{multivariate potential}, wherein size parameters may be multiplied together.

\paragraph{Findings}
The results of the analysis are in \cref{tab: stats}. For each of 36 functions from the Set module and our implementation of the Ord module comparison $\mathsf{ordcompare}$, we provide the following data: the number of lines of code in the test file (LoC), the time each implementation takes to infer potential annotations (Time), the number of linear constraints generated during inference (Constraints), the inferred upper bound on stack (Stack Bound), and the number of stack frames returned according to the analyses (Returned). In resource bounds, $n$ is used to describe the node count of tree arguments, $d$ the depth of tree arguments, and $s$ the size of list arguments. Subscripts disambiguate arguments by index, and a prime (') is added to refer returns instead of arguments. 

Measuring naive stack bounds on the Set module showcases multiple strengths of our type system. For one, the Set module is implemented using trees, making tree depth a relevant parameter. For another, stack frames are a resource which is returned after use. Neither of these cases are handled well in pre-existing AARA, but here they both naturally coincide. 

The data does indeed show that our analyses yields significantly tighter results. In 30 cases, our implementation finds a tighter bound than RaML, in the remaining 7 cases they find the same tight bound. Further, in 31 out of 37 cases, our implementation provides a tight bound on returned resources, whereas RaML never infers that more than 1 resource unit is returned. (Because we are measuring stack, we know that all resources should be returned in each case.) 

The general trend of \cref{tab: stats} is that our prototype can perform more accurate resource analyses at the expense of slower performance. However, on difficult-to-analyze code like the $\mathsf{join}$ function, the speed of our analysis approaches RaML's order of magnitude. This shows that our prototype can already achieve plausible performance where analysis tools are most desired. Furthermore, our implementation can analyze difficult code like the $\mathsf{split}$ function, where RaML fails to derive a bound. Our implementation only is unable to handle $\mathsf {union}$, while RaML fails on 6 different functions. 

The table also shows that the time taken by our prototype to generate linear constraints and solve them is roughly equal, so time could be improved by either a stronger LP solver or more aggressive heuristics and strategies for constraint generation. In particular, it may not be necessary to double the number of worldviews as often as we do.

That our prototype performs as quickly as it does compared to RaML is somewhat surprising. Our implementation internally maintains a completely decorated type derivation tree, while RaML does not. To keep up performance, RaML aggressively reuses annotations where possible, in general aiming to generate as few linear constraints as possible. However, as \cref{tab: stats} shows, the multiple-orders-of-magnitude more constraints generated in this fashion did not result in commensurate slowdown. We believe this is because easy constraints comprise a majority of our implementation's constraints, and modern LP solvers can eliminate them quickly. For example, constraints of the form $x=y$ comprise $55.56\%$ of generated constraints. One can find a more detailed breakdowns of the performance statistics in \cref{fig: breakdown}. Timing is subdivided into constraining and running the LP solver, and the constraint count breaks down into variable identities ($x=y$ for variables $x,y$), constant offsets ($x=y+k$ for variables $x,y$ and constant $k$), other equality constraints, and any remaining inequalities. 

\begin{table*}
\begin{tabular}{c | c c | c c c c}
	Function & Constrain Time & LP Time & Var IDs & Offsets &  Eqs & Other Ineqs
	\\
	\hline
	ordcompare & 0.00 & 0.00 & 7 & 0 & 3 & 13
	\\
	height & 0.00 &0.00 & 83	&0&	26&	105
	\\
	create& 0.00&	0.04 &5138&	0	&1205&	4379
	\\
	bal & 1.83 &	1.84 & 278669 &	262	&49868&	184652
	\\
	add & 3.32	&4.01 & 391214&	335&	60411&	245585
	\\
	singleton & 0.00 & 0.00 & 208	&0&	43&	297
	\\
	add$\_$min$\_$element & 1.94	&1.75 &292062&	279&	50773	&195577
	\\
	add$\_$max$\_$element & 1.96	&1.89&292124&	279&	50773&	195587
	\\
	join & 3.96 &	2.47 &417754&	357&	64253&	282201
	\\
	min$\_$elt & 0.00 & 0.01 &1738	&9&	392&	1477
	\\
	min$\_$elt$\_$opt & 0.00 & 0.01 &2117&	9&	403&	1695
	\\
	max$\_$elt & 0.00 & 0.01 &1738	&9&	392	&1477
	\\
	max$\_$elt$\_$opt  & 0.00 & 0.01 &2117&	9&	403&	1695
	\\
	remove$\_$min$\_$elt & 2	.00 & 1.68 & 292644&	284&	51707&	195065
	\\
	merge & 2.39 &	3.97 &325913&	332&	56022&	219388
	\\
	concat & 5.24	&5.33 &492858	&427	&73863&	336534
	\\
	split & 5.81&5.91 &523681&	414&	72934&	344178
	\\
	is$\_$empty & 0.00&	0.01&55&	0&	26&	79
	\\
	mem & 0.00&	0.03&7388&	22&	578&	4578
	\\
	remove & 5.09 &	3.46 &496624&	414&	71456&	312173
	\\
	union & 16.24 &	8.90 &894505	&625	&111227&	580532
	\\
	inter & 14.28 &	10.05&840854&	603&	100644&	544635
\end{tabular}
\caption{Prototype Statistic Breakdown}
\label{fig: breakdown}
\end{table*}

Interestingly, our prototype finds good bounds even though the Set module performs many tree operations using \textit{semantic} properties of the code, rather than structural. For instance, in $\mathsf{bal}$, trees are balanced by tracking their tree height. Our type system completely ignores that those integers track tree height, and is still able to infer good resource bounds.

\section{Conclusion}
\label{sec: conc}

In this work, we have presented the quantum physicist's method, a novel
set of quantum-physics-inspired reasoning principles to better adapt
the physicist's method of amortized analysis to reason about
non-monotone resources like stack use.
These principles, superposition and resource tunneling, have been
integrated with AARA and allow the automatic derivation of more
precise resource bounds, including new bounds based on tree
depth. Through implementation and testing, we found that our
extension to AARA requires only moderate overhead.

One challenge for future work is the adaptation of the convex
techniques for depth potential to handle other kinds of resource
bounds, like those based on the largest element in the data
structure. The difficulty is that if one has more than one notion of
element size, like depth \textit{and} node count, then the number of extreme
points to check grows exponentially. 

%% Acknowledgments
\begin{acks}                            %% acks environment is optional
                                        %% contents suppressed with 'anonymous'
  %% Commands \grantsponsor{<sponsorID>}{<name>}{<url>} and
  %% \grantnum[<url>]{<sponsorID>}{<number>} should be used to
  %% acknowledge financial support and will be used by metadata
  %% extraction tools.
  We thank John Grosen for comments and suggestions that helped to
  improve the soundness proof and the software artifact.

  This article is based on research supported by DARPA under AA
  Contract FA8750-18-C-0092 and by the National Science Foundation
  under SaTC Award 1801369, CAREER Award 1845514, and SHF Awards
  1812876 and 2007784. Any opinions, findings, and conclusions
  contained in this document are those of the authors and do not
  necessarily reflect the views of the sponsoring organizations.
\end{acks}

%% Bibliography
\bibliographystyle{ACM-Reference-Format}
\bibliography{sources.bib}

\clearpage

%% Appendix
\appendix
\section{Soundness Proof} \label{sec:soundproof}

Here we prove our soundness theorem:

\begin{theorem}[Soundness] If $V\vdash e \Downarrow v \mid (p,q)$ for $V:\langle \Gamma; P \rangle$ - so that the high-water cost is $p$ and $p-q$ is the net cost - then whenever our type system derives $\Gamma \mid P \vdash e : \tau \mid Q$, we find:
\begin{align*}
	p \leq & \Phi(V : \langle \Gamma; P \rangle)
	\\
	 p-q \leq & \Phi(V : \langle \Gamma; P\rangle) - \Phi(V,v : \langle \Gamma,\circ:\tau; Q \rangle )
\end{align*}
\end{theorem}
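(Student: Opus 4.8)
The plan is to follow the strategy sketched in \cref{subsec: proof}. Rather than attack the two stated inequalities head-on, I would first prove the refined, worldview-indexed consequent
\begin{align*}
\exists w \in \wv(P). \quad & p \leq \Phi(V : \langle \Gamma; P_w \rangle) \\
\forall u \in \wv(Q). \; \exists w \in \wv(P). \quad & p-q \leq \Phi(V : \langle \Gamma; P_w \rangle) - \Phi(V,v : \langle \Gamma,\circ:\tau; Q_u \rangle)
\end{align*}
by a nested induction on the cost derivation $V \vdash e \Downarrow v \mid (p,q)$ and the typing derivation $\Gamma \mid P \vdash e : \tau \mid Q$, and then recover the theorem as a corollary. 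The $\forall/\exists$ shape of the net-cost clause is not incidental: it is exactly what lets intermediate worldviews introduced at a sequencing point be threaded back to a source worldview, so it is the form that composes under induction.

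Recovering the theorem is immediate once the refined statement is in hand. Since $\Phi(V : \langle \Gamma; P \rangle) = \max_{w \in \wv(P)} \Phi(V : \langle \Gamma; P_w \rangle)$ by \cref{fig: pot}, the existential witness for the high-water clause gives $p \leq \Phi(V : \langle \Gamma; P_w \rangle) \leq \Phi(V : \langle \Gamma; P \rangle)$. For the net bound, pick $u^\ast \in \wv(Q)$ attaining the maximum defining $\Phi(V,v : \langle \Gamma,\circ:\tau; Q \rangle)$; the refined clause yields a $w$ with $p-q \leq \Phi(V : \langle \Gamma; P_w \rangle) - \Phi(V,v : \langle \Gamma,\circ:\tau; Q_{u^\ast} \rangle)$, and bounding $\Phi(V : \langle \Gamma; P_w \rangle)$ above by the max finishes the argument.

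For the induction, the structural rules (\emph{T-Relax}, the superposition and collapse rules, \emph{T-Var}, \emph{T-Tick}, \emph{T-Cond}) are routine manipulations of worldviews and constant potential, and the cost rules of \cref{fig: cost} make them easy. The two cases that carry real content beyond bookkeeping are sequencing and application. In \emph{T-Let}, the net costs compose additively --- a short computation shows $\max(p'-q,0) - \max(q-p',0) = p'-q$, so the returned-resource corrections cancel --- while the high-water mark $p + \max(p'-q,0)$ requires a case split on the sign of $p'-q$; when it is positive I would glue the net-cost clause for $e_1$ (relating $P$ to the intermediate annotation $R$) to the high-water clause for $e_2$ (which consumes $R[x/\circ]$), using precisely the $\forall u \,\exists w$ structure to pull the needed intermediate worldview back to one for $P$. \emph{T-App} is where resource tunneling is discharged: the premise furnishes a single classically-valid worldview $w$ with $P_w \geq 0$ whose argument potential dominates $\pi_{f\mathsf a}(P_w)$, and this $w$ alone is used to witness the high-water clause, while all worldviews --- including those carrying locally negative potential --- are charged only the net cost via the difference; soundness of the body's cost comes from the typing carried in $v : \langle \tau \to \rho; P \rangle$.

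The main obstacle, as the paper anticipates, is the tree cases \emph{T-Node} and \emph{T-Match-Tree}, specifically the construction relation $\RHD$ of \cref{fig: bookkeep} and its treatment of depth potential. Here the argument cannot pick a single worldview statically: the relation instead provides, for each conclusion worldview $u$, two premise worldviews $v, w$ that place all the depth potential on the left and right subtree respectively. The crux is the recurrence $\phi_{\mathsf d}(\node {t_1} a {t_2}) = 1 + \max(\phi_{\mathsf d}(t_1), \phi_{\mathsf d}(t_2))$: whichever subtree actually realizes the maximum depth of the constructed value is the one whose witnessing worldview validates the assigned depth potential, so a case split on which subtree is deeper selects the correct witness and shows the convex assignment is never an over-approximation. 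Dualizing this for destruction ($\lhd$) is comparatively easy, since splitting depth potential convexly across subtrees can only lose potential. Carefully matching the witness worldviews demanded by $\RHD$ against the existential witnesses produced by the induction hypotheses --- and checking that the per-element and constant potential equalities line up --- is the part that needs the most care, but it is exactly the depth recurrence that makes it work.
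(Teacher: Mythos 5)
Your proposal is correct and follows essentially the same route as the paper's proof in \cref{sec:soundproof}: the identical worldview-indexed lemma established by nested (lexicographically ordered) induction over the cost and typing derivations, recovered as a corollary via the max over worldviews, with the same key moves in each hard case --- the $\forall u\,\exists w$ threading through the intermediate annotation $R$ in \emph{T-Let} with the case split on the sign of $p'-q$, the single non-negative witness worldview discharging tunneling in \emph{T-App}, and the two extreme worldviews of $\RHD$ validated by the recurrence $\phi_{\mathsf d}(\node{t_1}{a}{t_2}) = 1+\max(\phi_{\mathsf d}(t_1),\phi_{\mathsf d}(t_2))$ in the tree rules. No gaps to report.
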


To do so, we note that it follows as a corollary of the following lemma:

\begin{lemma}[Worldview Soundness] If $V\vdash e \Downarrow v \mid (p,q)$ for $V:\langle \Gamma; P \rangle$ - so that the high-water cost is $p$ and $p-q$ is the net cost - then whenever our type system derives $\Gamma \mid P \vdash e : \tau \mid Q$, we find:
\begin{align}
\exists w \in \wv(P). \; & p \leq \Phi(V : \langle \Gamma; P_w \rangle) \label{ex}
\\
\forall u \in \wv(Q). \; \exists w \in \wv(P). \; & p-q \leq \Phi(V : \langle \Gamma; P_w \rangle) - \Phi(V,v : \langle \Gamma,\circ:\tau; Q_u \rangle ) \label{un}
\end{align} 
\end{lemma}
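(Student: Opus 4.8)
The plan is to prove the Worldview Soundness lemma by nested induction: an outer induction on the derivation of the operational judgment $V \vdash e \Downarrow v \mid (p,q)$, with the typing derivation $\Gamma \mid P \vdash e : \tau \mid Q$ threaded through. The two conclusions (\ref{ex}) and (\ref{un}) are what make the statement inductively robust: the existential for the high-water mark (\ref{ex}) gives a single worldview covering $p$, while the $\forall$-$\exists$ structure of (\ref{un}) matches every remainder worldview $u$ to some pre-potential worldview $w$ covering the net cost. The first thing I would check is that the desired Soundness theorem really does follow as a corollary, which it does immediately: since $\Phi(V:\langle\Gamma;P\rangle) = \max_{w} \Phi(V:\langle\Gamma;P_w\rangle)$ by \cref{fig: pot}, the witness for (\ref{ex}) is dominated by the max, giving the high-water bound; and for the net bound, taking $u$ to achieve the min defining $\Phi(V,v:\langle\Gamma,\circ{:}\tau;Q\rangle)$ and the $w$ from (\ref{un}) which is dominated by the max, the difference of maxes bounds $p-q$.

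For the inductive step I would proceed rule by rule on the typing derivation, reconciling it with the applicable cost rule. The structural worldview rules (\textit{T-Superposition-In/Out}, \textit{T-Collapse-In/Out}) should be nearly immediate, since duplicating or deleting a worldview does not change any per-worldview potential, and the existential/universal quantifiers adapt trivially (e.g. collapsing a worldview in $P$ only removes candidates, so one must check the witness for (\ref{ex}) survives — this is exactly where the implicit premiss that some $P_u \geq 0$ matters). For \textit{T-Tick}, the cost rule gives $p = \max(r,0)$, $q = \max(-r,0)$, and the constant-potential shift $P(*) - r = Q(*)$ makes both inequalities direct arithmetic. For \textit{T-Let} I would invoke the inductive hypotheses on $e_1$ and $e_2$, using the substitution $R[x/\circ]$ to connect the remainder of the first premise to the context of the second, and then combine the cost expressions $p + \max(p'-q,0)$ and $q' + \max(q-p',0)$ with the two potential bounds; the case split on the sign of $p'-q$ mirrors the max structure. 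The \textit{T-Var}, \textit{T-Leaf}, \textit{T-Cond}, \textit{T-Fun-Rec}, and \textit{T-App} cases reduce to bookkeeping with the location-index arithmetic, where \textit{T-App}'s resource-tunneling premiss ($\exists w.\, P_w \geq 0$ with enough potential on the argument) supplies precisely the witness worldview needed for (\ref{ex}) while the net-cost equation handles (\ref{un}).

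The main obstacle will be the tree cases, \textit{T-Node} and \textit{T-Match-Tree}, specifically the justification of \emph{depth} potential encoded by the $\RHD$ relation. Here the relation posits, for each conclusion worldview $u$, two premise worldviews $v,w$ that place all depth potential on the left and on the right subtree respectively, with $R_v(t_1\mathsf d) = R_w(t_2\mathsf d) = Q_u(t_3\mathsf d)$ and the cross terms zeroed. The crux is the semantic fact that $\phi_{\mathsf d}(\node{t_1} s {t_2}) = 1 + \max(\phi_{\mathsf d}(t_1), \phi_{\mathsf d}(t_2))$: exactly one of the two witnessing worldviews concentrates the depth potential on the genuinely deeper subtree, so that worldview's potential meets or exceeds $Q_u(t_3\mathsf d)\cdot\phi_{\mathsf d}(t_3)$, while the $R_v(*) = R_w(*) = Q_u(t_3\mathsf d) + Q_u(t_3\mathsf e*)$ constant term pays for the extra $+1$ of depth and the freshly added root element. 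The delicate part is that I cannot know statically which subtree is deeper, so I must choose the witness worldview \emph{based on the actual value} $v = \node{v_1}{v_a}{v_2}$ that the operational semantics produced — selecting $v$ when $\phi_{\mathsf d}(v_1) \geq \phi_{\mathsf d}(v_2)$ and $w$ otherwise — and then verify via the $\max$ identity that this choice discharges both (\ref{ex}) and (\ref{un}). The element-potential terms ($\mathsf e \cdot i$) distribute additively and should follow routinely from the matching constraints in \cref{fig: bookkeep}, so the real work is entirely in correctly case-splitting on the relative depths of the two subtrees and confirming the non-negativity side conditions hold for the chosen witness.
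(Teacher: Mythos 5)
Your overall plan coincides with the paper's own proof: the same nested induction (with the cost derivation as the primary component of the lexicographic measure, so that the structural worldview rules can recurse on the typing derivation alone), the same case analysis, the same sign-split on $p'-q$ in the let case, and---most importantly---you correctly identify the crux of the tree cases: the witness worldview among the $\RHD$-pair $v,w$ must be chosen \emph{a posteriori}, according to which subtree of the actual value $\node{v_1}{v_a}{v_2}$ realizes $\max(\phi_{\mathsf d}(v_1),\phi_{\mathsf d}(v_2))$, with the constant term $R_v(*)=R_w(*)=Q_u(t_3\mathsf d)+Q_u(t_3\mathsf e *)$ paying for the root. That is exactly the paper's \textit{T-Node}/\textit{T-Match-Tree} argument.

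Two points need repair, one substantive. You dismiss \textit{T-App} as ``bookkeeping with the location-index arithmetic,'' but \cref{ex} in that case requires bounding the high-water mark $p$ of evaluating the \emph{closure body}, and no local annotation arithmetic can produce that bound. The paper's proof inverts the semantic typing $V:\langle \Gamma; P\rangle$ to extract a \textit{T-Fun-Rec} derivation for the body of the closure $V(f)$, then applies the induction hypothesis to the body's evaluation premiss---legitimate only because the measure is primarily the cost derivation, since that typing derivation is \emph{not} a subderivation of the application's typing. The tunneling premiss's worldview $w$ then transfers the body's bound to $\Phi(V:\langle\Gamma;P_w\rangle)$ via the subtyping lemma, and the argument for \cref{un} leans on the worldview-invariance of $\pi_{f\mathsf a}(P)$ (which the paper flags as ``necessary for soundness''); your plan mentions none of this mechanism, and executed literally it would stall there. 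Separately, a small slip in your corollary check: $\Phi$ over a multi-worldview context is defined in \cref{fig: pot} as a \emph{max}, not a min, and the correct instantiation of \cref{un} is at the $u$ \emph{maximizing} $\Phi(V,v:\langle\Gamma,\circ{:}\tau;Q_u\rangle)$---subtracting the max-achieving remainder worldview is what yields the difference-of-maxes bound in the Soundness theorem.
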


We prove this lemma via nested induction over the type and cost derivations. The lexicographic ordering of the cost derivation size followed by the type derivation size will always decrease. Recall throughout that, for each judgment $\Gamma \mid P \vdash e : \tau \mid Q$, we impose the side condition that $\exists w \in \wv(P). \; P_w \geq 0$ and $\exists w \in \wv(Q). \; Q_w \geq 0$.

\subsection{Technical Lemmas}

These lemmas will be used throughout our proof.

\begin{lemma}[Non-Negative Potential] \label{lem:nonneg} If $P \geq 0$, then $\Phi(V: \langle \Gamma; P\rangle) \geq 0$.
\end{lemma}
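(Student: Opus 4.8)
The plan is to reduce the claim to a single worldview and then observe that, within a worldview, $\Phi$ is a nonnegative linear combination of quantities that are themselves always nonnegative. Since $\Phi(V : \langle \Gamma; P \rangle) = \max_{w \in \wv(P)} \Phi(V : \langle \Gamma; P_w \rangle)$ by the last clause of \cref{fig: pot}, and a maximum of nonnegative reals is nonnegative, it suffices to show $\Phi(V : \langle \Gamma; P_w \rangle) \geq 0$ for every fixed worldview $w$. Because $P \geq 0$ is interpreted pointwise over worldviews and location indices, we have $P_w(i) \geq 0$ for all $i$; moreover each projection $\pi_x(P_w)$ and $\pi_e(P_w)$ inherits this nonnegativity, since projections only restrict the set of location indices.

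The crux is the auxiliary claim that $\phi_i(v) \geq 0$ for every location index $i$ and value $v$, which I would prove by structural induction on $v$ using the definitions in \cref{fig: pot}. The base cases are immediate: $\phi_*(v) = 1 \geq 0$, and $\phi_{i \neq *}$ evaluates to $0$ on leaves, closures, and basic values. For the node case $v = \node {t_1} {v_a} {t_2}$, the depth index gives $\phi_{\mathsf d}(v) = 1 + \max(\phi_{\mathsf d}(t_1), \phi_{\mathsf d}(t_2)) \geq 0$ regardless of the inductive hypothesis (the $1+$ alone guarantees it), while the element indices give $\phi_{\mathsf e \cdot i}(v) = \phi_i(v_a) + \phi_{\mathsf e \cdot i}(t_1) + \phi_{\mathsf e \cdot i}(t_2)$, a sum of terms nonnegative by the induction hypothesis.

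With this in hand the remaining cases assemble directly. For a value at a base type, $\Phi(v : \langle \tau; P_w \rangle) = \sum_{i \in I(\tau)} P_w(i)\cdot \phi_i(v)$ is a sum of products of nonnegatives, hence nonnegative. The tree shortcut $\Phi(t : \langle T(\tau); P_w \rangle) = P_w(*) + P_w(\mathsf d)\cdot \phi_{\mathsf d}(t) + \sum_{s \in t} \Phi(s : \langle \tau; \pi_e(P_w)\rangle)$ is likewise a sum of nonnegative summands: $P_w(*) \geq 0$, the product $P_w(\mathsf d)\cdot\phi_{\mathsf d}(t) \geq 0$, and each recursive term is nonnegative by the structural induction together with the preserved nonnegativity of $\pi_e(P_w)$. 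Finally $\Phi(V : \langle \Gamma; P_w \rangle) = P_w(*) + \sum_{v:x \in V:\Gamma} \Phi(v : \langle \Gamma(x); \pi_x(P_w)\rangle)$ is nonnegative for the same reason, and taking the maximum over $w$ recovers the stated bound.

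The argument is entirely routine; the only points demanding care are bookkeeping. First, one must confirm that projections preserve the hypothesis $P_w \geq 0$, so that the induction hypothesis applies to the restricted annotation maps. Second, the depth index behaves differently from the others in that $\phi_{\mathsf d}$ uses a $\max$ rather than a sum, but this is harmless since $1 + \max(\cdot,\cdot)$ of nonnegatives is plainly nonnegative. Notably no appeal to the special side condition that $P(\mathsf d)$ be nonnegative is needed here, as the hypothesis $P \geq 0$ already subsumes it.
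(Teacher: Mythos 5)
Your proposal is correct and takes essentially the same approach as the paper, whose entire proof is the one-line observation that potential is a sum of non-negative values scaled by coefficients in $P$. Your structural induction establishing $\phi_i(v) \geq 0$ for all indices $i$ and values $v$ (including the closure and depth cases) is simply that observation spelled out in full detail, and your closing remarks about projections preserving non-negativity and the redundancy of the $P(\mathsf d) \geq 0$ side condition are accurate.
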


\begin{proof}
Potential is calculated as a sum of non-negative values scaled by coefficients in $P$.
\end{proof}

\begin{lemma}[Subtype Potential] \label{lem:sub} If $\Gamma \vdash P \sub Q $, then $\Phi(V: \langle \Gamma; P\rangle) \geq \Phi(V: \langle \Gamma; Q\rangle)$
\end{lemma}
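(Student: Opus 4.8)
The plan is to induct on the structure of the derivation of $\Gamma \vdash P \sub Q$ (equivalently, on the base type relating $P$ and $Q$), exploiting the fact that the potential $\Phi(v : \langle \tau; P\rangle) = \sum_{i \in I(\tau)} P(i)\cdot \phi_i(v)$ is a \emph{non-negatively weighted} sum of the annotations. The crucial preliminary fact is that $\phi_i(v) \geq 0$ for every location index $i$ and value $v$. This follows by a routine induction on $v$ using the clauses in \cref{fig: pot}: $\phi_*(v) = 1$, the non-constant $\phi_{i \neq *}$ vanish on leaves, closures, and basic values, the depth measure satisfies $\phi_{\mathsf d}(\node{t_1}{v}{t_2}) = 1 + \max(\phi_{\mathsf d}(t_1), \phi_{\mathsf d}(t_2)) \geq 0$, and the element measures $\phi_{\mathsf e \cdot i}$ are sums of non-negative quantities. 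With non-negativity in hand, dominating the annotations pointwise dominates the potential.

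Then I would handle each rule of \cref{fig: sub} in turn. For the base cases \textsc{S-Basic} and \textsc{S-Arr}: both a basic value and a closure satisfy $\phi_{i \neq *} = 0$, so $\Phi(v : \langle \tau; P\rangle) = P(*)$ and likewise for $Q$, whence the premiss $P(*) \geq Q(*)$ immediately gives the claim. It is worth flagging that the contravariant and covariant premisses of \textsc{S-Arr} on $\pi_{\mathsf a}, \pi_{\mathsf b}, \pi_{\mathsf c}$ play no role here: they matter for the soundness of application, not for the potential stored in a function value, since a closure carries only its constant. This is the case one might fear but which turns out to be the simplest.

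For the inductive case \textsc{S-Tree}, I would decompose $\Phi(t : \langle T(\tau); P\rangle) = P(*) + P(\mathsf d)\cdot \phi_{\mathsf d}(t) + \sum_{s \in t}\Phi(s : \langle \tau; \pi_{\mathsf e}(P)\rangle)$ and compare term by term with the same decomposition for $Q$: the constant term uses $P(*) \geq Q(*)$; the depth term uses $P(\mathsf d) \geq Q(\mathsf d)$ together with $\phi_{\mathsf d}(t) \geq 0$; and each summand is handled by the induction hypothesis applied to the premiss $\tau \vdash \pi_{\mathsf e}(P) \sub \pi_{\mathsf e}(Q)$ on the element type. The case \textsc{S-Context} is identical in form: split off $P(*) \geq Q(*)$ and apply the induction hypothesis to each $\Gamma(x) \vdash \pi_x(P) \sub \pi_x(Q)$, summing the resulting inequalities. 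Finally, if the annotation contexts carry several worldviews, the per-worldview inequalities combine under the max in $\Phi(v : \langle \tau; P\rangle) = \max_{w} \Phi(v : \langle \tau; P_w\rangle)$, since $\max$ is monotone.

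The main obstacle is essentially bookkeeping rather than conceptual: one must be careful that \emph{every} clause of $\phi$ is non-negative so that scaling by a dominating annotation preserves the inequality. The depth clause, defined with a $\max$, is the one place where this needs a short argument, and the function clause is the one place where the apparent danger (contravariance in the argument annotation) dissolves, precisely because closures store no structural potential.
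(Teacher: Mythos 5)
Your proof is correct and follows the same route as the paper's (much terser) argument: pointwise domination of annotations transfers to potential because every measure $\phi_i$ is non-negative, with the function case trivial since closures carry only constant potential, so the contravariant premiss of \textsc{S-Arr} is irrelevant to stored potential. Your version simply makes explicit the non-negativity lemma for $\phi$, the per-rule induction, and the monotonicity of the worldview $\max$, all of which the paper leaves implicit.
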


\begin{proof}
The subtyping rules require potential to be pointwise less, except in the case of function types. However, because function types always carry 0 potential, this final case holds.
\end{proof}

\subsection{Relax}

Suppose this was the final type rule applied:
\begin{mathpar}
\inferrule[T-Relax]{
	\Gamma \mid R \vdash e: \tau \mid  S
	\\
	\Gamma \vdash P \sub R
	\\
	\Gamma, \circ: \tau \vdash S \sub Q 
}{
	\Gamma \mid P  \vdash e: \tau \mid Q
}
\end{mathpar}
Then we may assume that its premisses hold. Because $\Gamma \mid R \vdash e: \tau \mid  S$ involves the same expression $e$ as the conclusion, we can use the assumed cost derivation for the conclusion to apply the inductive hypothesis. We then learn the following, where $\wv(R)=\wv(P)$ and $\wv(S)=\wv(Q)$.
\begin{align}
\exists w \in \wv(R). \; & p \leq \Phi(V : \langle \Gamma; R_w \rangle) \label{exrelax}
\\
\forall u \in \wv(S). \; \exists w \in \wv(R). \; & p-q \leq \Phi(V : \langle \Gamma; R_w \rangle) - \Phi(V,v : \langle \Gamma,\circ:\tau; S_u \rangle ) \label{unrelax}
\end{align} 
Combining \cref{lem:sub} with the remaining premisses, we find that \cref{ex} follows directly from \cref{exrelax}, and \cref{un} follows directly from \cref{unrelax}. This completes the relax case.

\subsection{Superposition}
Suppose any of these was the final type rule applied:
\begin{mathpar}
     \inferrule[T-Superposition-In]{
	\Gamma \mid P, u \mapsto P_w  \vdash e : \tau \mid Q
}{
	\Gamma \mid P \vdash e : \tau \mid Q
}

     \inferrule[T-Superposition-Out]{
	\Gamma \mid P \vdash e: \tau \mid Q
}{
	\Gamma \mid P \vdash e: \tau \mid Q, u \mapsto Q_w
}

     \inferrule[T-Collapse-In]{
	\Gamma \mid P \vdash e  : \tau \mid Q
}{
	\Gamma \mid P, w \mapsto R \vdash e  : \tau \mid Q
}

     \inferrule[T-Collapse-Out]{
	\Gamma \mid P \vdash e:\tau \mid Q, w \mapsto R
}{
	\Gamma \mid P \vdash e:\tau \mid Q
}
\end{mathpar}

Then we may assume its premiss holds. Because each premiss involves the same expression $e$ as the conclusion, we can use the assumed cost derivation for the conclusion to apply the inductive hypothesis.  

No matter the particular rule, the set of $\{P_w'\}$ from premiss context $P'$ is a subset of that of the conclusion. Thus, the existential witness of \cref{ex} from the premiss's inductive hypothesis still holds for the identical expression $e$ in the conclusion. Thus, the conclusion satisfies \cref{ex}.

Similarly, the set of $\{Q_w'\}$ from the premiss remainder is a superset of that of the conclusion. Because the premiss's inductive hypothesis \cref{un} already satisfies the universal quantification over the larger set, \cref{un} continues to hold for the identical expression $e$ in the conclusion. Thus, the conclusion also satisfies \cref{un}.

\subsection{Var}

Suppose this was the final type rule applied:
\begin{mathpar}
\inferrule[T-Var]{
	\pi_x(P) = \pi_x(Q) + \pi_\circ(Q)
	\\
	\pi_{\neg x} (P) =  \pi_{\neg \{x,\circ\}} (Q) 
}{
	\Gamma, x:\tau \mid P \vdash x: \tau \mid Q
}
\end{mathpar}
Because $Q$'s worldviews are pointwise related to $P$'s, we may assume without loss of generality that $\wv(P)=\wv(Q)$, and that their labelling is consistent with the pointwise behaviour.

There is only one compatible cost rule that could end the cost derivation, and we may assume its premisses hold as well: 
\begin{mathpar}
\infer[\!\!\mathit{Var}]{
        V \vdash x \Downarrow v \mid (0,0)
    }{
    	V(x) =v
    }
\end{mathpar}
Thus we want to prove specifically that
\begin{align}
    \exists w \in \wv(P). \; & 0 \leq \Phi(V : \langle \Gamma; P_w \rangle) \label{exvar}
    \\
    \forall u \in \wv(Q). \; \exists w \in \wv(P). \; & 0 \leq \Phi(V : \langle \Gamma; P_w \rangle) - \Phi(V,v : \langle \Gamma,\circ:\tau; Q_u \rangle ) \label{unvar}
\end{align}
\cref{exvar} follows immediately from the side condition combined with \cref{lem:nonneg}. Then we finish off the case by finding that the premisses express a conservation of potential $\Phi(V : \langle \Gamma; P_w \rangle) = \Phi(V,v : \langle \Gamma,\circ:\tau; Q_w \rangle )$, so that \cref{unvar} is satisfied with an equality at the same worldview for $P$ and $Q$.

\subsection{Tick}
Suppose this was the final type rule applied:
\begin{mathpar}
\inferrule[T-Tick]{
	P(*) - r = Q(*)
	\\
	\pi_{\neg *} (P) =  \pi_{\neg \{*,\circ\}} (Q) 
}{
	\Gamma \mid P \vdash \tick r : \unit \mid Q
}
\end{mathpar}
Because $Q$'s worldviews are pointwise related to $P$'s, we may assume without loss of generality that $\wv(P)=\wv(Q)$, and that their labelling is consistent with the pointwise behaviour.

There is only one compatible cost rule that could end the cost derivation, and we may assume its premisses hold as well:
\begin{mathpar}
\infer[\mathit{Tick}]{
		V \vdash \mathit{tick}\{r\} \Downarrow () \mid (p,q) 
	}{
		p = max(r,0)
		&
		q = max(-r,0)
	}
\end{mathpar}
Thus, after applying some algebra, we want to prove specifically that:
\begin{align}
    \exists w \in \wv(P). \; & max(r,0) \leq \Phi(V : \langle \Gamma; P_w \rangle) \label{extick}
    \\
    \forall u \in \wv(Q). \; \exists w \in \wv(P). \; & r \leq \Phi(V : \langle \Gamma; P_w \rangle) - \Phi(V, () : \langle \Gamma,\circ:\unit; Q_u \rangle ) \label{untick}
\end{align}
When $r$ is negative, \cref{extick} follows because of \cref{lem:nonneg} and the side condition on $P$. When $r$ is positive, \cref{extick} follows because of \cref{lem:nonneg}, the side condition on $Q$, and the premisses that ensure that the context has exactly $r$ more potential than the remainder, $\Phi(V : \langle \Gamma; P_w \rangle) = \Phi(V,v : \langle \Gamma,\circ:\unit; Q_w \rangle ) + r$. This same equation also ensures \cref{untick} is satisfied with an equality at the same worldview for $P$ and $Q$..

\subsection{Let}
Suppose this was the final type rule applied:
\begin{mathpar}
     \inferrule[T-Let]{
	\Gamma \mid P \vdash e_1 : \tau \mid R
	\\
	\Gamma, x:\tau \mid R[x/\circ] \vdash e_2 : \rho \mid Q
	\\
	\pi_x(Q) \geq 0
    }{
    	\Gamma \mid P \vdash \elet x {e_1} {e_2} : \rho \mid \pi_{\neg x}(Q)
    }
\end{mathpar}
Then we may assume its premisses hold. Further, there is only one compatible cost rule that could end the cost derivation, and we may assume its premisses hold as well:
\begin{mathpar}
     \infer[\mathit{Let}]{
        V \vdash \elet x {e_1} {e_2} \Downarrow v_2 \mid (p+max(p'-q,0),q'+max(q-p',0))
    }{
        V \vdash e_1 \Downarrow v_1 \mid (p,q)
        & 
        V[x \mapsto v_1] \vdash e_2 \Downarrow v_2 \mid (p',q')
    }
\end{mathpar}
Thus, after applying some algebra, we want to prove specifically that:
\begin{align}
    \exists w \in \wv(P). \; & p + max(p'-q,0) \leq \Phi(V : \langle \Gamma; P_w \rangle) \label{exlet}
    \\
    \forall u \in \wv(Q). \; \exists w \in \wv(P). \; & p+p'-q-q' \leq \Phi(V : \langle \Gamma; P_w \rangle) \nonumber
    \\
    &- \Phi(V,v_2 : \langle \Gamma,\circ:\tau; Q_u \rangle ) \label{unlet}
\end{align}
The premisses of each let rule allow us to apply our inductive hypothesis to find that:
\begin{align}
     \exists w \in \wv(P). \; & p  \leq \Phi(V : \langle \Gamma; P_w \rangle) \label{exlet1}
    \\
    \forall u \in \wv(R). \; \exists w \in \wv(P). \; & p-q \leq \Phi(V : \langle \Gamma; P_w \rangle) - \Phi(V,v_1 : \langle \Gamma,\circ:\tau; R_u \rangle ) \label{unlet1}
    \\
    \exists w \in \wv(R). \; & p' \leq \Phi(V[x\mapsto v_1] : \langle \Gamma,x:\tau ; R_w[x/\circ] \rangle) \label{exlet2}
    \\
    \forall u \in \wv(Q). \; \exists w \in \wv(R). \; & p'-q' \leq \Phi(V[x\mapsto v_1] : \langle \Gamma,x:\tau; R_w[x/\circ] \rangle) \nonumber
    \\
    &- \Phi(V[x\mapsto v_1],v_2 : \langle \Gamma,,x:\tau,\circ:\rho; Q_u \rangle ) \label{unlet2}
\end{align}

If $p' \geq q$, then \cref{exlet} can be derived by using the existential witness of \cref{exlet2}, plugging it into the universal quantification of \cref{unlet1}. Then add the inequalities at these witnesses from \cref{unlet1} and \cref{exlet2}, and cancel equivalent values, yielding \cref{exlet}. Otherwise, if $p' < q$, then \cref{exlet1} satisfies \cref{exlet}.

To derive \cref{unlet}, one considers an arbitrary $u \in \wv(Q)$, and finds the existential witness $w \in \wv(R)$ guaranteed by \cref{unlet2}. This $w$ is then plugged into the universal quantification of \cref{unlet1} to get a $w' \in \wv(P)$. Finally, add the remaining inequalities of \cref{unlet1} and \cref{unlet2} at $w'$ and $u$ respectively, and cancel equivalent potential values. The resulting inequality can be weakened using \cref{lem:nonneg} and the premiss that $\pi_x(Q) \geq 0$, yielding \cref{unlet}.

\subsection{Cond}
Suppose this was the final type rule applied:
\begin{mathpar}
\inferrule[T-Cond]{
	\Gamma, b:\tbool \mid P \vdash e_1 : \tau \mid Q
	\\
	\Gamma, b:\tbool \mid P \vdash e_2 : \tau \mid Q
}{
	\Gamma, b:\tbool \mid P \vdash \ite b {e_1} {e_2} : \tau \mid Q
}
\end{mathpar}
Then we may assume its premisses hold. Further, there are only two compatible cost rules that could end the cost derivation, and we may the premisses hold from one of them:
\begin{mathpar}
\infer[\mathit{CondT}]{
            V \vdash \ite {x_b} {e_t} {e_f} \Downarrow v \mid (p,q)
        }{
           	V(x_b) = \mathit{true}
		& 
		V \vdash e_t \Downarrow v \mid (p,q)
        }

        \infer[\mathit{CondF}]{
            V \vdash \ite {x_b} {e_t} {e_f} \Downarrow v \mid (p,q)
        }{
           	V(x_b) = \mathit{false}
		& 
		V \vdash e_f \Downarrow v \mid (p,q)
	}
\end{mathpar}
In either case, note that the cost and contexts are unchanging between the whole conditional and the single branch. Thus the inductive hypothesis gained from the typing and cost rules' premisses satisfies that needed for the conclusion.   

\subsection{Fun-Rec}
Suppose this was the final type rule applied:
\begin{mathpar}
 \inferrule[T-Fun-Rec]{
	\Gamma, f: \tau \rightarrow \rho, x:\tau \mid Q \vdash e : \rho \mid R
	\\
	\lfloor \pi_{\neg \{x,*\}}(Q) \rfloor
	\\
	\pi_{\neg\{x,*\}}(Q) = \pi_{\neg \{x,*,\circ\}}(R)
	\\
	\pi_x(Q)[* \mapsto Q(*)] = \pi_{f \mathsf a}(Q)
	\\
	\pi_\circ(R) = \pi_{f\mathsf b}(Q)
	\\
	\pi_x(R)[ * \mapsto R(*)] = \pi_{f\mathsf c}(Q)
	\\
	\pi_\circ(P) = \pi_f(Q)
}{
	\Gamma \mid \pi_{\neg \circ} (P) \vdash \fun f x e : \tau \rightarrow \rho \mid P
}
\end{mathpar}
There is only compatible cost rule that could end the cost derivation:
\begin{mathpar}
 \infer[\mathit{Fun}]{
        		V \vdash \fun f x e \Downarrow \closure f V x e \mid (0,0)
        }{
        }
\end{mathpar}
Thus, after applying some algebra, we want to prove specifically that:
\begin{align}
    \exists w \in \wv(P). \; & 0 \leq \Phi(V : \langle \Gamma; \pi_{\neg \circ}(P_w) \rangle) \label{exfun}
    \\
    \forall u \in \wv(P). \; \exists w \in \wv(P). \; & 0 \leq \Phi(V : \langle \Gamma; \pi_{\neg \circ}(P_w) \rangle) \nonumber
    \\
    & - \Phi(V,\closure f V x e : \langle \Gamma,\circ:\tau \rightarrow \rho; P_u \rangle ) \label{unfun}
\end{align}

\cref{exfun} follows immediately from the side condition combined with \cref{lem:nonneg}. \cref{unfun} is then satisfied by equality because functions carry no potential, and otherwise both potential terms are identical.

\subsection{App}
Suppose this was the final type rule applied:
\begin{mathpar}
    \inferrule[T-App]{
    	\exists w \in \wv(P). \; \tau \vdash \pi_x(P_w)[* \mapsto P_w(*)] \sub \pi_{f\mathsf a}(P_w) \wedge P_w \geq 0
    	\\
    	\pi_{x}(P)[x*\mapsto P(*)] - \pi_{f\mathsf a}(P) + \pi_{f\mathsf c}(P) = \pi_{x}(Q)[x*\mapsto Q(*)]
    	\\
    	\pi_{f\mathsf b}(P) = \pi_\circ(Q)
    	\\
    	\pi_{\neg x}(P) = \pi_{\neg\{x,\circ\}}(Q)
    }{
    	\Gamma, f:\tau \rightarrow \rho, x:\tau \mid P  \vdash f\;x : \rho \mid Q
    }
\end{mathpar}
Because $Q$'s worldviews are pointwise related to $P$'s, we may assume without loss of generality that $\wv(P)=\wv(Q)$, and that their labelling is consistent with the pointwise behaviour.

There is only compatible cost rule that could end the cost derivation:
\begin{mathpar}
\infer[\mathit{App}]{
            V \vdash f\; x \Downarrow v  \mid (p,q)
        }{
        	    V(x) = v_x
	    &
            V(f) = \closure g {V'} {x'} {e}
	    &
            V'[x' \mapsto v_x, g \mapsto \closure g {V'} {x'} {e}] \vdash e \Downarrow v \mid (p,q)
        }
\end{mathpar}
Thus, after applying some algebra, we want to prove specifically that:
\begin{align}
    \exists w \in \wv(P). \; & p \leq \Phi(V : \langle \Gamma, f:\tau \rightarrow \rho, x:\tau; P_w \rangle) \label{exapp}
    \\
    \forall u \in \wv(Q). \; \exists w \in \wv(P). \; & p-q \leq \Phi(V : \langle \Gamma, f:\tau \rightarrow \rho, x:\tau; P_w \rangle) \nonumber
    \\
    &- \Phi(V,v : \langle \Gamma, f:\tau \rightarrow \rho, x:\tau,\circ:\rho; Q_u \rangle ) \label{unapp}
\end{align}
Further, due to the premiss $V:\langle \Gamma; P \rangle$, we know that the value of $f$ must be typable as $g:\langle \tau \rightarrow \rho ; \pi_f(P_{w'})\rangle$ for each $w' \in \wv(P)$. Thus, the following typing rule must be applicable, where $V': \langle \Gamma'; P' \rangle$ and $|\wv(P')|=1$:
\begin{mathpar}
 \inferrule[T-Fun-Rec]{
	\Gamma', g: \tau \rightarrow \rho, x':\tau \mid Q' \vdash e : \rho \mid R'
	\\
	\lfloor \pi_{\neg \{x',*\}}(Q') \rfloor
	\\
	\pi_{\neg\{x',*\}}(Q') = \pi_{\neg \{x',*,\circ\}}(R')
	\\
	\pi_{x'}(Q')[* \mapsto Q'(*)] = \pi_{g \mathsf a}(Q')
	\\
	\pi_\circ(R') = \pi_{f\mathsf b}(Q')
	\\
	\pi_{x'}(R')[ * \mapsto R'(*)] = \pi_{g\mathsf c}(Q')
	\\
	\pi_\circ(P') = \pi_f(Q')
}{
	\Gamma' \mid \pi_{\neg \circ} (P') \vdash \fun g {x'} {e} : \tau \rightarrow \rho \mid P'
}
\end{mathpar}
We may therefore assume the premisses of this rules hold as well. Combining the typing judgment premiss of this rule with the cost rule's premiss allows us to apply the inductive hypothesis, learning the following inequalities. Note that the quantification is trivial, because each has only one worldview.
\begin{align}
 \exists w \in \wv(Q'). \; & p \leq \Phi(V' : \langle \Gamma, g:\tau \rightarrow \rho, x':\tau; Q_w' \rangle) \label{exapp1}
    \\
    \forall u \in \wv(Q'). \; \exists w \in \wv(R'). \; & p-q \leq \Phi(V' : \langle \Gamma, g:\tau \rightarrow \rho, x':\tau; Q_w' \rangle) \nonumber
    \\
    &- \Phi(V',v : \langle \Gamma, g:\tau \rightarrow \rho, x':\tau,\circ:\rho; R_u' \rangle ) \label{unapp1}
\end{align}
To derive \cref{exapp}, first recall that the potential assigned by $P$ to the function's argument type is the same in every worldview, so we need only consider 1 case of how $Q'$ assigns potential. Some of the remaining premisses of the latter typing rule ensure that $Q'$ imbues zero potential aside from that on $x'$ and the ambient constant potential. There is a premiss on the former typing rule that ensures that there is some worldview in which $P$ imbues at least as much potential. At the witnessing world, \cref{lem:sub} ensures that $P$ gives $x$ and the constant potential at least as much as $Q'$ gave $x'$ and its constant potential, and otherwise $P$ supplies a non-negative amount of potential where $Q'$ supplies zero. These facts combined with \cref{exapp1} yield \cref{exapp}.

To derive \cref{unapp}, consider the particular witness typing for $P$'s worldview $w'$. We first use some of the remaining premisses of the latter typing rule to find that $R'$ assigns no potential other than on the argument $x'$, the ambient constant potential, and the return $\circ$. The last premisses on the latter typing rule ensure that the potential annotations ensure that the potential annotation of these possibly-nonzero values is assigned to the return and remainder on $g$. Thus, the difference in potential expressed in \cref{unapp1} is exactly the difference expressed between the argument and return-plus-remainder of the function $f$ in $P_w'$. In turn, the remaining premisses of the former typing rule ensure that the difference in potential on the function type in $P_{w'}$ is exactly the difference in potential expressed in \cref{unapp} when at worldview $w'$ for both $P$ and $Q$. Thus, we find the inequality is met with the same worldview for both $P$ and $Q$ in every case, and \cref{unapp} is satisfied.

\subsection{Leaf}
Suppose this was the final type rule applied:
\begin{mathpar}
\inferrule[T-Leaf]{
}{
	\Gamma \mid P \vdash \leaf : T(\tau)  \mid P
}
\end{mathpar}
There is only compatible cost rule that could end the cost derivation:
\begin{mathpar}
 \infer[\mathit{Leaf}]{
            V \vdash \leaf \Downarrow \leaf \mid (0,0)
        }{
        }
\end{mathpar}
Thus, after applying some algebra, we want to prove specifically that:
\begin{align}
    \exists w \in \wv(P). \; & 0 \leq \Phi(V : \langle \Gamma; P_w \rangle) \label{exleaf}
    \\
    \forall u \in \wv(P). \; \exists w \in \wv(P). \; & 0 \leq \Phi(V : \langle \Gamma; P_w \rangle) - \Phi(V,v : \langle \Gamma,\circ:T(\tau); P_u \rangle ) \label{unleaf}
\end{align}
\cref{exleaf} follows immediately from the side condition combined with \cref{lem:nonneg}. Then, because leaves carry no potential, \cref{unleaf} is always satisfied with an equality at the same worldview for $P$ and $Q$.
    
\subsection{Node}

Suppose this was the final type rule applied:
\begin{mathpar}
 \inferrule[T-Node]{
	P \RHD_{s,t_1,t_2,\circ} Q
}{
	\Gamma, s:\tau, t_1 : T(\tau), t_2 : T(\tau) \mid P \vdash \node {t_1} s {t_2} : T(\tau) \mid Q
}
\end{mathpar}
There is only compatible cost rule that could end the cost derivation:
\begin{mathpar}
 \infer[\mathit{Node}]{
           V \vdash \node {t_1} {a} {t_2} \Downarrow \node {v_1} {v_a} {v_2}\mid  (0,0)     
        }{
        	V(t_1) = v_1
	&
	V(t_2) = v_2
	& 
	V(a) = v_a
        }
\end{mathpar}
Thus, after applying some algebra, we want to prove specifically that:
\begin{small}
\begin{align}
    \exists w \in \wv(P). \; & 0 \leq \Phi(V,v_a, v_1,v_2 : \langle \Gamma, s:\tau, t_1 : T(\tau), t_2 : T(\tau); P_w \rangle) \label{exnode}
    \\
    \forall u \in \wv(Q). \; \exists w \in \wv(P). \; & 0 \leq \Phi(V,v_a, v_1,v_2 : \langle \Gamma, s:\tau, t_1 : T(\tau), t_2 : T(\tau); P_w \rangle) \nonumber
    \\
    &- \Phi(V,v_a, v_1,v_2, \node {v_1} {v_a} {v_2} : \langle \Gamma, s:\tau, t_1 : T(\tau), t_2 : T(\tau),\circ:T(\tau); Q_u \rangle ) \label{unnode}
\end{align}
\end{small}
\cref{exnode} is satisfied by combining \cref{lem:nonneg} with the side condition.
Then to derive \cref{unnode}, we first unpack the premiss of the type rule:

\begin{scriptsize}
\begin{align*}
    P \RHD_{s,t_1,t_2,\circ} Q \equiv\;  & \forall u \in \wv(Q). \; \exists v,w \in \wv(P).\; \pi_{\neg\{s,t_1,t_2,\circ,*\}}(P_w) = \pi_{\neg\{s,t_1,t_2,\circ,*\}}(P_v) = \pi_{\neg \{s,t_1,t_2,\circ, *\}}(Q_u)
\\
& \wedge R_v = \pi_{\{s,t_1,t_2,*\}}(P_v) - \pi_{\{s,t_1,t_2,*\}}(Q_u) \wedge R_w = \pi_{\{s,t_1,t_2,*\}}(P_w) - \pi_{\{s,t_1,t_2,*\}}(Q_u) 
\\
& \wedge \pi_{s}(R_v)[*\mapsto Q_u(\circ\mathsf e*)] = \pi_{t_1 \mathsf e}(R_v) = \pi_{t_2 \mathsf e}(R_v)=\pi_{s}(R_w)[*\mapsto Q_u(\circ \mathsf e*)]  = \pi_{t_1 \mathsf e}(R_w) = \pi_{t_2 \mathsf e}(R_w) = \pi_{\circ \mathsf e} (Q_u)
\\
& \wedge R_v(t_1\mathsf d)=  R_w(t_2\mathsf d) =Q_u(\circ \mathsf d) \wedge R_v(t_2\mathsf d)=  R_w(t_1\mathsf d) = 0
\\
&\wedge R_v(*) = R_w(*) = Q_u(\circ \mathsf d) + Q_u(\circ \mathsf e *)
\end{align*}
\end{scriptsize}

The first line of this matches the quantification we need, picking two corresponding worldviews in $P$ for each one in $Q$. This line then tells us that there is always a worldview of $P$ with equal context annotations to that of $Q$, so these cancel and we need only consider what happens to the values being directly operated on ($v_a,v_1,v_2$) and the ambient potential. The second line begins defining an auxiliary potential assignment $R$ that tracks the difference in potential annotations between each worldview in $Q$ and and its corresponding pair of worldviews in $P$. The third line assures us that all the node- and payload-based potential is identical between that taken from the values operated on and the resulting tree $ \node {v_1} {v_a} {v_2}$. Because this resulting tree combines the nodes of the values operated on, these kinds of potentials will also cancel, save for 1 unit of per-node potential at the root. The fourth line tells us that in worldview $v$, the depth potential of the output tree is equal to that of $t_1$ while $t_2$ has no depth potential, and meanwhile in worldview $w$ the two subtrees swap these roles. Since one of these subtrees actually witnesses the maximum depth the world where this occurs accounts for all but the root node of $\circ$'s depth potential. The final line ensures that the ambient potential pays for the remaining depth- and node-based potential at the root.

Thus, \cref{unnode} is satisfied for an arbitrary worldview $u \in \wv(Q)$ with an equality at one of the worldviews $v,w \in \wv(P)$. That worldview is the one putting all depth-potential on the subtree of maximum depth, and otherwise using potential annotations that cancel perfectly with those in $Q_u$.

\subsection{Match Tree}

Suppose this was the final type rule applied:
\begin{mathpar}
\inferrule[T-Match-Tree]{
	\Gamma,t: T(\tau) \mid P \vdash e_1: \rho \mid Q
	\\
	\Gamma,t: T(\tau),  s:\tau, t_1 : T(\tau), t_2 : T(\tau) \mid R \vdash e_2: \rho \mid S
	\\
	P \lhd_{s,t_1,t_2,t} R
	\\
	S' \RHD_{s,t_1,t_2,t} Q'
	\\
	\pi_{\neg t}(S') =\pi_{\neg t}(S) 
	\\
	\pi_{\neg t}(Q') =\pi_{\neg t}(Q) 
	\\
	\pi_t(S)-\pi_t(S') = \pi_t(Q)-\pi_t(Q') = \pi_t(R)
}{
	\Gamma, t: T(\tau) \mid P \vdash \matcht t {e_1} {t_1} s {t_2} {e_2} :\rho \mid Q
}
\end{mathpar}
There are only two compatible cost rules that could end the cost derivation:
\begin{mathpar}
 \infer[\!\!\mathit{TMatchL}]{
           V \vdash  \matcht x {e_1} {t_1} a {t_2} {e_2} \Downarrow v \mid  (p,q)    
        }{
            V(x) = \leaf
            &
            V \vdash e_1 \Downarrow v \mid (p,q)
        }
        
 \infer[\!\!\mathit{TMatchT}]{
            V \vdash  \matcht x {e_1} {t_1} a {t_2} {e_2} \Downarrow v \mid  (p,q)     
        }{
            V(x) = \node {v_1} {v_a} {v_2} 
            &
            V[t_1 \mapsto v_1, t_2 \mapsto v_2, a \mapsto v_a] \vdash e_2 \Downarrow v \mid (p,q)
        }
\end{mathpar}

Consider the first cost rule for leaves. Across the typing and cost rule, the values of $P,Q,V,v,p,q$ are identical between the conclusions and premisses (the first premiss specifically in the case of the typing rule). Thus the inductive hypothesis gives exactly the properties we desire.

Now consider the second cost rule for nodes. In this case we would like to prove the following.

\begin{align}
    \exists w \in \wv(P). \; & p \leq \Phi(V : \langle \Gamma, x : T(\tau); P_w \rangle) \label{exmatcht}
    \\
    \forall u \in \wv(Q). \; \exists w \in \wv(P). \; & p-q \leq \Phi(V : \langle \Gamma, x : T(\tau); P_w \rangle) \nonumber
    \\
    &- \Phi(V,v : \langle \Gamma,x: T(\tau),\circ:\rho; Q_u \rangle ) \label{unmatcht}
\end{align}
We can combine the second premiss of the type rule with the premiss of the second cost rule to apply our inductive hypothesis and get the following, where $V' = V[t_1 \mapsto v_1, t_2 \mapsto v_2, a \mapsto v_a]$:
\begin{align}
     \exists w \in \wv(R). \; & p \leq \Phi(V': \langle \Gamma, x : T(\tau), t_1:  T(\tau), t_2 : T(\tau), a: T(\tau); R_w \rangle) \label{exmatchtp}
    \\
    \forall u \in \wv(S). \; \exists w \in \wv(R). \; & p-q \leq \Phi(V' : \langle \Gamma, x : T(\tau), t_1:  T(\tau), t_2 : T(\tau), a: T(\tau); R_w \rangle) \nonumber
    \\
    &- \Phi(V',v : \langle \Gamma, x : T(\tau), t_1:  T(\tau), t_2 : T(\tau), a: T(\tau), \circ:\rho ; S_w \rangle) \label{unmatchtp}
\end{align}
Now let us determine how $P$ relates to $R$. This is given by the following:
\begin{align*}
    P \lhd_{s,t_1,t_2,t_3} R \equiv \; & \pi_{\neg\{s,t_1,t_2,t_3,*\}}(P) = \pi_{\neg\{s,t_1,t_2,t_3,*\}}(R) \wedge R'=\pi_{t_3}(P) - \pi_{t_3}(R) 
\\
&\wedge \pi_s(R)[* \mapsto R'(e*)] = \pi_{t_1\mathsf e}(R) =\pi_{t_2\mathsf e}(R) = \pi_e(R')
\\
& \wedge R'(\mathsf d) = \pi_{t_1\mathsf d}(R) + \pi_{t_2\mathsf d}(R) \wedge R(*) = P(*) + R'(\mathsf e *) + R'(\mathsf d)
\end{align*}
The first line begins by establishing that $P$ and $R$ are identical except the values being operated on ($v_a,v_1,v_2, \node {v_1}{v_a}{v_2}$) and the ambient potential. It then defines an auxiliary potential assignment $R'$ tracking the amount of potential taken from $t$ in $P$ for use in $R$. The second line then establishes that the node- and payload-based potential is identical between that taken from $t$ and that assigned to the types of $t_1$ and $t_2$, as well as the type of $a$. Because the nodes of $v_a,v_1,v_2$ together make up $t: \node {v_1}{v_a}{v_2}$, so far all potential cancels (except for the extra per-node potential on $t$'s root). The final line tells us that the per-depth potential taken from $t$ is split convexly between its subtrees, which can account for at most all the depth potential on $t$ save for that on the root node in the case that the convext split assigns all potential to the deepest subtree. The final line also that the ambient potential is adjusted to account for the node- and depth- based potential remaining from that taken off $t$'s root node. 

To summarize, the potential between $P$ and $R$ cancels at all points except the convex split into the subtrees $v_1,v_2$ where $R$ might have less. Therefore, $\Phi(V: \langle \Gamma,x:T(\tau); P_w\rangle) \geq \Phi(V' : \langle \Gamma, x:T(\tau), s: \tau, t_1 : T(\tau), t_2 : T(\tau); R_w \rangle)$ in all worldviews $w\in \wv(P)$. This gives us \cref{exmatcht} from \cref{exmatchtp}.

Now we determine how $Q$ relates to $S$. The premisses in the typing rule first define $Q'$ and $S'$, which are identical to $Q$ and $S$ respectively except at $t$, where they differ by the amount of potential left on $t$ in $R$, which is the amount not used in the pattern match. $S'$ and $Q'$ are then related as if they were to construct $t$ from $s,t_1,t_2$. From the node case of our induction, we know this guarantees that for each worldview $u \in \wv(Q')$, there exists one in $S'$ assigning the same total potential across their respective values. Because $Q$ and $S$ differ by the same amount of potential from $Q'$ and $S'$, the same relation holds for them. 

We can now use this witnessing worldview in $S$ with \cref{unmatchtp} to get a witnessing worldview in $w \in \wv(R)$ assigning total potential no less than $p-q$ more than $Q_u$ across their respective values. That is:
\begin{align*}
    \forall u \in \wv(Q). \; \exists w \in \wv(R). \; & p-q \leq \Phi(V' : \langle \Gamma, x : T(\tau), t_1:  T(\tau), t_2 : T(\tau), a: T(\tau); R_w \rangle) \nonumber
    \\
    &- \Phi(V,v : \langle \Gamma,x: T(\tau),\circ:\rho; Q_u \rangle )
\end{align*}
Weakening this inequality by the relation we have already determined between $P$ and $R$ yields \cref{unmatcht}, completing this case.

\end{document}